\documentclass{article}

\PassOptionsToPackage{numbers,compress}{natbib}
\usepackage[main,final]{neurips_2026}

\usepackage[T1]{fontenc}
\usepackage[utf8]{inputenc}
\usepackage{microtype}
\usepackage{amsmath,amssymb,amsthm,mathtools}
\usepackage{aliascnt}
\usepackage{array}
\usepackage{booktabs}
\usepackage{enumitem}
\usepackage{tabularx}
\usepackage{xcolor}
\usepackage{graphicx}
\usepackage{url}
\usepackage[colorlinks=true,linkcolor=blue!50!black,citecolor=blue!50!black,urlcolor=blue!50!black]{hyperref}
\hypersetup{pdftitle={When One Leak Pays Forever: Context Binding and the Price of Deterring Collusion},
  pdfauthor={Tingyi Lin, Shawn Yu, Ruoran Lai, Huanxi Zhang}}
\usepackage[nameinlink,noabbrev]{cleveref}

\numberwithin{equation}{section}
\newtheorem{theorem}{Theorem}[section]
\crefname{theorem}{Theorem}{Theorems}
\Crefname{theorem}{Theorem}{Theorems}
\newaliascnt{lemma}{theorem}
\newtheorem{lemma}[lemma]{Lemma}
\aliascntresetthe{lemma}
\crefname{lemma}{Lemma}{Lemmas}
\Crefname{lemma}{Lemma}{Lemmas}
\newaliascnt{proposition}{theorem}
\newtheorem{proposition}[proposition]{Proposition}
\aliascntresetthe{proposition}
\crefname{proposition}{Proposition}{Propositions}
\Crefname{proposition}{Proposition}{Propositions}
\newaliascnt{corollary}{theorem}
\newtheorem{corollary}[corollary]{Corollary}
\aliascntresetthe{corollary}
\crefname{corollary}{Corollary}{Corollaries}
\Crefname{corollary}{Corollary}{Corollaries}
\theoremstyle{definition}
\newaliascnt{definition}{theorem}
\newtheorem{definition}[definition]{Definition}
\aliascntresetthe{definition}
\crefname{definition}{Definition}{Definitions}
\Crefname{definition}{Definition}{Definitions}
\newaliascnt{example}{theorem}
\newtheorem{example}[example]{Example}
\aliascntresetthe{example}
\crefname{example}{Example}{Examples}
\Crefname{example}{Example}{Examples}
\theoremstyle{remark}
\newaliascnt{remark}{theorem}
\newtheorem{remark}[remark]{Remark}
\aliascntresetthe{remark}
\crefname{remark}{Remark}{Remarks}
\Crefname{remark}{Remark}{Remarks}
\theoremstyle{plain}
\newtheorem{maintheorem}{Theorem}

\newcommand{\Players}{[n]}
\newcommand{\A}{\mathcal{A}}
\newcommand{\Rset}{\mathcal{R}}
\newcommand{\LambdaS}{\Lambda}
\newcommand{\N}{\mathbb{N}}

\title{When One Leak Pays Forever: Context Binding and the Price of Deterring Collusion}

\author{%
  Tingyi Lin$^{1\dagger}$, Shawn Yu$^{2}$, Ruoran Lai$^{3}$, Huanxi Zhang$^{4}$\\
  $^{1}$Adrasteia Labs\quad $^{2}$Boston College\\
  $^{3}$Sun Yat-sen University\quad $^{4}$University of Wisconsin--Madison
}
\date{}

\begin{document}
\maketitle
{\renewcommand{\thefootnote}{\fnsymbol{footnote}}%
\footnotetext[2]{Correspondence to: tingyi3@illinois.edu}}

\begin{abstract}
A coalition that deviates once can profit many times when what it sells keeps
working. In a threshold-encrypted mempool, a leading defense against maximal
extractable value (MEV), a quorum of the decryption committee that sells its
decryption capability to a front-runner exposes every later block that the
capability still decrypts. We ask how large a penalty, such as
slashable stake, deters this kind of collusion. In our repeated game, a single
leak by any coalition in a monotone family of authorized coalitions (for
example, any $k$ of the $n$ committee members) unlocks a set of future rounds,
costs a one-time penalty, and ends the coalition's participation. We show that
every dynamic deviation reduces to choosing a leak time, so deterrence holds if
and only if
each coalition's penalty covers the largest discounted value that a single leak
reaches. Without discounting, over $T$ rounds of unit value full reuse needs a
penalty of $T$ while binding each leak to its own round needs $1$, so no penalty
that is constant in the horizon deters unbounded reuse; a reuse window of $w$
rounds costs at most $w$ times the largest per-round value. The cheapest profile of
per-party stakes that deters every coalition solves a covering linear program.
For blockchain design, per-epoch keys cut the required stake from the value of a
key's lifetime to the value of one epoch; we calibrate the gap on Ethereum
front-running data and place Ferveo and Shutter in the model. The analysis extends to sealed-bid auctions,
multi-authority voting, and federated learning under a shared key.
\end{abstract}

\section{Introduction}

\begin{flushright}\small
\textit{Three may keep a secret, if two of them are dead.}\\
--- Benjamin Franklin, \textit{Poor Richard's Almanack}, 1735
\end{flushright}

Deterrence is usually priced one act at a time: a deviation is deterred when
the penalty it risks exceeds what it gains \cite{Becker68}. In a repeated game
the gain is a continuation value \cite{MailathSamuelson06}, and the comparison
stays simple as long as each deviation is a single act. It stops being simple
when a coalition deviates once by selling a capability that keeps working. The
buyer then collects in later rounds as well, the coalition's gain becomes a
stream, and the penalty needed to deter the sale depends on how many future
rounds that one sale reaches.

Threshold-encrypted mempools are the leading example
\cite{BebelOjha22,DziembowskiFaustLuhn24,KavousiLeJovanovicDanezis23}. Users
encrypt transactions to a committee key, and a batch is decrypted only after its
order is fixed, which withholds the information that front-running and
sandwich attacks exploit
\cite{DaianGoldfederKellLiZhaoBentovBreidenbachJuels19,TorresCaminoState21}. Any
authorized quorum of the committee can still reconstruct a decryption
capability early and sell it to a front-runner. When the committee key lives
for many epochs, one sale reveals every later batch encrypted under it.
Context-dependent threshold decryption \cite{BonehBunzNayakRotemShoup25} binds
each decryption share to a public context such as an epoch number. When every
epoch has its own key, or when the long-term key shares sit in modules that
decrypt only the current context's ciphertexts and are replaced as soon as a
leak becomes public, a sale in one epoch reveals nothing about the next. Sealed-bid auctions,
multi-authority voting, and federated learning under a collective key share
this pattern of one leak and many rounds (\Cref{sec:apps}).

We study a repeated game among $n$ parties who jointly hold a decryption
capability. A monotone family $\A$ of \emph{authorized} coalitions, such as all
coalitions of at least $k$ parties, lists who can leak. The \emph{stage value}
$v_t(S)$ is what coalition $S$ can earn by selling early access to the hidden
data of round $t$. A leak is public and ends the leaking parties' role in the
protocol, so each coalition leaks at most once. The \emph{reachability set}
$\Rset_S(t)$ is the set of rounds whose data a buyer can read after $S$ leaks at
round $t$; a long-lived key makes it every later round, and context binding,
which ties everything a coalition can leak to its round, makes it round $t$
alone (\Cref{fig:reach}). The \emph{exposure} $\kappa(S)$ is the one-time
penalty that $S$ bears for its leak, such as slashed stake, a tracing liability,
or forfeited rewards. Values are discounted by a factor $\delta$ per round and
are transferable within a coalition, and the discounted value that a single leak
reaches is bounded across leak times. A coalition is \emph{deterred} if, after
any history without a leak, it gains nothing from any deviation strategy, and
honest execution is \emph{coalition-safe} if every coalition is deterred.

\begin{figure}[t]
\centering
\includegraphics{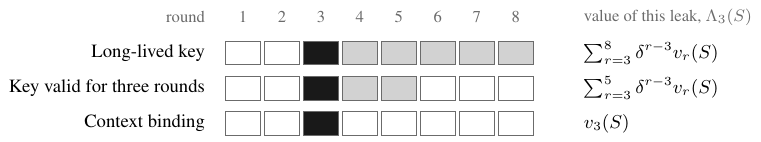}
\caption{A leak at round $3$ under three keying designs. Dark: the round of
the leak; gray: later rounds whose data the buyer can still read; white: rounds
that stay hidden. The right column is the value $\LambdaS_3(S)$ of this leak, and
\Cref{thm:master} requires $\kappa(S)\geq\sup_t\LambdaS_t(S)$.}
\label{fig:reach}
\end{figure}

Our first result prices deterrence exactly.

\begin{maintheorem}[Exact deterrence threshold]
A coalition $S$ is deterred if and only if its exposure $\kappa(S)$ is at least
the discounted value $\LambdaS_t(S)$ of the rounds that a leak by $S$ at round
$t$ reaches, for every $t$:
\[
\kappa(S)\geq\sup_{t}\LambdaS_t(S),
\qquad
\LambdaS_t(S)=\sum_{r\in\Rset_S(t)}\delta^{r-t}v_r(S).
\]
Honest execution is coalition-safe if and only if this holds for every
authorized coalition.
\end{maintheorem}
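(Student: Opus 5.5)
The plan is to reduce every dynamic deviation by $S$ to the choice of a single leak time, and then compare that one-shot payoff with the penalty. Fix a history $h$ without a leak, ending at round $t_0$. A deviation strategy for $S$ may be history-dependent and randomized. Since a leak is public, terminates $S$'s participation and can happen at most once, any such strategy induces a (possibly random) stopping time $\tau\in\{t_0,t_0+1,\dots\}\cup\{\infty\}$, the round at which $S$ leaks. Before $\tau$, $S$ plays honestly as far as payoff is concerned: the only value-generating deviation in the model is the leak, and transfers within $S$ net out because values are transferable within the coalition. I would therefore write the deviation gain relative to honest play, discounted to round $t_0$, as
\[
G=\mathbb{E}\Bigl[\mathbf 1\{\tau<\infty\}\,\delta^{\tau-t_0}\bigl(\LambdaS_\tau(S)-\kappa(S)\bigr)\Bigr].
\]
At round $\tau$ the buyer gains access to exactly the rounds in $\Rset_S(\tau)$, and $S$ sells that access for $\sum_{r\in\Rset_S(\tau)}\delta^{r-\tau}v_r(S)$. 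The honest continuation of $S$ after $\tau$ is forfeited, but I would take stage values to be measured relative to honest payoffs, so this loss is already netted out. The one-time penalty $\kappa(S)$ is paid at $\tau$. Boundedness of $\sup_t\LambdaS_t(S)$ makes the expectation well defined and justifies exchanging sum and expectation via dominated convergence.

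Sufficiency then follows directly. If $\kappa(S)\ge\sup_t\LambdaS_t(S)$, each term inside the expectation is nonpositive, so $G\le0$ after every leak-free history, and $S$ is deterred. For necessity, suppose $\LambdaS_{t^*}(S)>\kappa(S)$ for some $t^*$. Take the leak-free history that ends just before $t^*$, which exists because honest play is leak-free, and the deterministic deviation that leaks at $t^*$. Its gain is $\LambdaS_{t^*}(S)-\kappa(S)>0$, so $S$ is not deterred. If the supremum is not attained, the same argument applies to any $t^*$ with $\LambdaS_{t^*}(S)>\kappa(S)$, and such a $t^*$ exists by the definition of the supremum. The coalition-safe statement follows by taking the conjunction over $S\in\A$. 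Unauthorized coalitions cannot leak at all, so for them $\tau=\infty$ always and $G=0$.

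The main obstacle is the first step: showing rigorously that nothing besides the timing of the leak matters. Three issues need care. First, there may be interactions across coalitions, such as overlapping coalitions or a subset leaking after a superset. I would handle these by noting that deterrence is defined per coalition against unilateral deviation from honest play by the others. Second, the reachability set $\Rset_S(\tau)$ must depend only on $(S,\tau)$ and not on the pre-leak behaviour of $S$. I would check this against the model's definition; if it failed, pre-leak actions could enlarge the reach. Third, I must confirm that post-leak removal of $S$ means no further stage value accrues to $S$ beyond what the leak's price captures. Once $G$ is written in the displayed stopping-time form, the rest is the elementary pointwise bound above.
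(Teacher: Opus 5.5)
Your proposal is correct and follows the paper's route: your stopping-time formula is exactly \Cref{lem:normal} (which the paper states for pure deviations and extends to randomized ones by averaging), after which sufficiency is the same pointwise sign argument and necessity is the same honest-until-$t^\star$-then-leak deviation. The differences are cosmetic. The paper evaluates that deviation at round $1$, obtaining $\delta^{t^\star-1}\bigl(\LambdaS_{t^\star}(S)-\kappa(S)\bigr)>0$, and it settles your forfeited-continuation concern by the model's payoff convention that non-leak actions carry no payoff, with lost future rewards counted in $\kappa(S)$, rather than by reading $v_t(S)$ as net of honest payoffs.
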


Theorem~A is \Cref{thm:master}. A coalition may time its leak to any history of
play, and one leak may unlock a nonconsecutive set of later rounds. Because the
leak is the coalition's last act, every deviation is worth exactly what its leak
is worth (\Cref{lem:normal}), so the repeated interaction enters only through
the reachability sets. The one-shot rule of Becker~\cite{Becker68} is the case
$\Rset_S(t)=\{t\}$.

Our second result measures what reuse costs.

\begin{maintheorem}[Reuse sets the price]
Fix an authorized coalition $S$.
\begin{enumerate}[label=(\roman*),leftmargin=2em]
\item On a $T$-round instance without discounting in which every stage value of
$S$ equals $1$, deterring $S$ requires exposure $T$ when a leaked capability
stays valid in every later round, and exposure $1$ when it is bound to its own
round.
\item Without discounting and with unit stage values, $S$ is deterred exactly
when its exposure is at least $\rho(S)$, the largest number of rounds that one
leak by $S$ reaches. No exposure that is constant in the horizon therefore
deters $S$ along a family of instances in which $\rho(S)$ grows without bound.
\item If every leak by $S$ reaches at most $\rho(S)$ rounds and every stage
value of $S$ is at most $\bar v(S)$, then exposure
$\bar v(S)\sum_{j=0}^{\rho(S)-1}\delta^{j}$, which is at most
$\rho(S)\,\bar v(S)$, deters $S$. No smaller exposure does when a leak at some
round $t$ reaches exactly the rounds $t,\dots,t+\rho(S)-1$, each of value
$\bar v(S)$.
\end{enumerate}
\end{maintheorem}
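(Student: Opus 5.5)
All three parts follow from Theorem~A (\Cref{thm:master}) by computing $\sup_t\LambdaS_t(S)$ under the stated reachability structure. We take the criterion $\kappa(S)\geq\sup_t\LambdaS_t(S)$ as given and reduce each claim to an elementary bound on the sum $\sum_{r\in\Rset_S(t)}\delta^{r-t}v_r(S)$. No further game-theoretic argument is needed, since \Cref{lem:normal} has already collapsed every dynamic deviation to a choice of leak time.

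For (i), set $\delta=1$ and $v_r(S)=1$ on rounds $1,\dots,T$, so that $\LambdaS_t(S)=|\Rset_S(t)|$. Under full reuse, $\Rset_S(t)=\{t,\dots,T\}$ and $\LambdaS_t(S)=T-t+1$. This is maximized at $t=1$, which gives the threshold $T$. Under context binding, $\Rset_S(t)=\{t\}$, so $\LambdaS_t(S)=1$ for every $t$ and the threshold is $1$. The convention that $\Rset_S(t)$ only contains rounds $r\geq t$ of the instance should be checked against the paper's definition.

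For (ii), the same computation gives $\LambdaS_t(S)=|\Rset_S(t)|$, so the threshold is $\sup_t|\Rset_S(t)|=\rho(S)$. Then fix any constant $c$ and any family of instances with $\rho(S)\to\infty$. Some instance in the family has $\rho(S)>c$, and by Theorem~A exposure $c$ fails to deter $S$ there. If $\rho(S)$ is defined as a maximum, we need the supremum to be attained; on finite horizons it is finite and attained, and otherwise we simply read $\rho(S)$ as the supremum.

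For (iii), the upper bound needs one observation. Because $\Rset_S(t)\subseteq\{t,t+1,\dots\}$ and has at most $\rho(S)$ elements, the exponents $r-t$ are distinct nonnegative integers. Since $\delta\leq1$, the sum of at most $\rho(S)$ terms $\delta^{r-t}$ over distinct exponents is at most the sum of the $\rho(S)$ largest possible terms, namely $\sum_{j=0}^{\rho(S)-1}\delta^j$. Bounding each $v_r(S)\leq\bar v(S)$ (stage values are nonnegative) gives $\LambdaS_t(S)\leq\bar v(S)\sum_{j=0}^{\rho(S)-1}\delta^j\leq\rho(S)\bar v(S)$ for every $t$, so Theorem~A yields deterrence. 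For tightness, if some leak at $t$ reaches exactly $t,\dots,t+\rho(S)-1$ with each value equal to $\bar v(S)$, then $\LambdaS_t(S)$ equals the bound. Theorem~A therefore shows that any smaller exposure fails.

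The only step needing care is this rearrangement argument: it relies on $r\geq t$ for all reachable rounds and on $\delta\in(0,1]$, and I expect to verify both from the model's definitions rather than prove anything new.
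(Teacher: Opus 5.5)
Your proposal is correct and takes the paper's route. Each part computes $\sup_t \LambdaS_t(S)$ under the given reachability sets and applies the exact threshold of Theorem~A. Part (iii) rests on the same observation the paper uses: a discounted sum over at most $\rho(S)$ rounds at distinct nonnegative delays is largest when those delays are $0,1,\dots,\rho(S)-1$.

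Your caveats are already settled by the model. The paper defines $\Rset_S(t)\subseteq\{t,\dots,T\}$ with $t\in\Rset_S(t)$, takes $\delta\in(0,1]$, and defines $\rho(S)$ as a supremum, so no attainment is needed.
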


Our third result splits the requirement across parties.

\begin{maintheorem}[Cheapest deterring stakes]
Suppose party $i$ posts exposure $d_i\geq 0$, for example bonded stake, and a
coalition's exposure is the sum over its members. A profile $(d_i)$ deters
every authorized coalition exactly when it satisfies one covering constraint per
authorized coalition, so the cheapest such profile solves a covering linear
program. If the authorized coalitions are those with at least $k$ members, for
some $1\leq k\leq n$, all parties post the same amount, and the supremum of a
coalition's reachable value depends only on its size $m$, say $\lambda_m$,
then the smallest deterring amount per party is $\max_{k\leq m\leq n}\lambda_m/m$.
\end{maintheorem}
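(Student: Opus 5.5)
The plan is to reduce everything to Theorem~A (\Cref{thm:master}), which says that a coalition $S$ is deterred exactly when $\kappa(S)\geq\sup_t\LambdaS_t(S)$. Under the additive specification $\kappa(S)=\sum_{i\in S}d_i$, I would define $\lambda(S):=\sup_t\LambdaS_t(S)$. This quantity is finite by the standing boundedness assumption on the discounted value a single leak reaches. Theorem~A then says that $S$ is deterred if and only if $\sum_{i\in S}d_i\geq\lambda(S)$. Honest execution is coalition-safe if and only if this inequality holds for every $S\in\A$. So the set of deterring profiles is
\[
\Bigl\{d\in\mathbb{R}^n_{\geq 0}:\ \textstyle\sum_{i\in S}d_i\geq\lambda(S)\ \text{for all } S\in\A\Bigr\},
\]
one covering constraint per authorized coalition. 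Minimizing total stake $\sum_i d_i$ over this set is a covering LP, since the constraint matrix is the $0/1$ incidence matrix and the right-hand sides are nonnegative. The LP is feasible, because taking every $d_i$ large enough works: $\A$ is finite and each $\lambda(S)$ is finite. Its objective is bounded below by $0$, so an optimum exists. This disposes of the first part.

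For the symmetric threshold case, the authorized family is $\A=\{S:|S|\geq k\}$, and $\lambda(S)=\lambda_m$ whenever $|S|=m$. With the uniform profile $d_i=d$, the constraint for $S$ becomes $md\geq\lambda_m$. Every size $m\in\{k,\dots,n\}$ is realized by some authorized coalition, so the profile deters every coalition if and only if $d\geq\lambda_m/m$ for all $k\leq m\leq n$. The smallest deterring amount is therefore $\max_{k\leq m\leq n}\lambda_m/m$, and since this is a finite maximum the value is attained.

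I expect no real obstacle here; the work is bookkeeping on top of Theorem~A. The one point needing care is that Theorem~A must be applied with the exposure $\kappa(S)=\sum_{i\in S}d_i$ for every authorized $S$ simultaneously. Theorem~A's characterization is per coalition, and coalition-safety is just its conjunction over $\A$, so this causes no difficulty. Another subtlety is whether $\lambda_m$ should be nonnegative so that $d\geq 0$ is harmless. If $\lambda_m$ could be negative the maximum might be negative, and the answer would then be $\max\{0,\max_m\lambda_m/m\}$. I would note that stage values are nonnegative, so $\lambda_m\geq 0$ and the stated formula stands.
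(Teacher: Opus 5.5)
Your proposal is correct and follows the paper's own route. Both arguments apply \Cref{thm:master} with $\kappa(S)=\sum_{i\in S}d_i$ to obtain one covering constraint per $S\in\A$; in the symmetric case both collapse the constraints $|S|\,d\geq\sup_t\LambdaS_t(S)$ to one per size $m\in\{k,\dots,n\}$, which gives $\max_{k\leq m\leq n}\lambda_m/m$. Your extra checks (LP feasibility, attainment of the optimum, and $\lambda_m\geq 0$ from nonnegative stage values) are sound, and the paper leaves them implicit.
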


Theorem~B collects \Cref{thm:separation}, \Cref{prop:reuse-mult}, and
\Cref{cor:bounded-reuse}; Theorem~C collects \Cref{cor:hetero-lp,cor:uniform}.
\Cref{sec:beyond} extends Theorem~A to probabilistic and time-distributed
exposure, where liability that grows with use restores deterrence under long
reuse.

\paragraph{Techniques.}
Three objects describe a leak: who can produce it ($\A$), what each round is
worth ($v_t$), and how far one leak propagates ($\Rset_S$). Only the third
involves time. \Cref{lem:normal} shows that it carries all of the dynamics, and
exact deterrence becomes a supremum over first-leak times. Two smaller state
variables fail. Comparing $\kappa(S)$ with the largest stage value, as a
one-shot model does, is exact under context binding and can be off by a factor
of $T$ under full reuse (Theorem~B(i)). Counting reachable rounds is exact with
unit values and no discounting (\Cref{prop:reuse-mult}); otherwise the identity
and timing of the reached rounds matter. Since the reachability set is the only dynamic object,
context binding acquires a one-line game-theoretic meaning,
$\Rset_S(t)=\{t\}$, and Theorem~B prices it.

\section{Related Work}
\label{sec:related}

\paragraph{Deterrence and coalition deviations.}
Comparing an expected penalty with the gain from an offense goes back to
Becker~\cite{Becker68}, and \Cref{cor:oneshot} is that rule for a single leak.
Deviations by groups are the subject of strong equilibrium \cite{Aumann59} and
coalition-proof equilibrium \cite{BernheimPelegWhinston87}. Our benchmark is
lighter: a coalition deviates from honest execution and, since utility is
transferable inside it, acts as one player. In repeated games the one-shot
deviation principle reduces deviations to single-period ones when payoffs are
continuous at infinity \cite{MailathSamuelson06}. \Cref{lem:normal} plays that
role here, with one difference: a leak is irreversible, so the deviation that
matters is the first leak, and the reachability set fixes its value. The new
element is this state variable: it suffices for exact deterrence over any
monotone family of coalitions, and it separates reusable from context-bound
deviations by a factor of $T$.

\paragraph{Stake, slashing, and economic security.}
Proof-of-stake protocols deter misbehavior by slashing bonded stake
\cite{ButerinGriffith17}, and Budish, Lewis-Pye, and Roughgarden~\cite{BudishLewisPyeRoughgarden24} study when
attacks on permissionless consensus can be made expensive. In those analyses the
slashed stake is weighed against the profit of an attack. Our exposure $\kappa$
plays the role of the slashed stake, and the profit it must outweigh is a stream
over every round that one leak unlocks.

\paragraph{Incentives in collaborative learning.}
Mechanisms for collaborative and federated learning deter individual
participants who under-collect data, fabricate it, or send dishonest updates
\cite{DornerKonstantinovPashalievVechev23,ChenZhuKandasamy23}. The deviation
studied here is collective, a quorum that can open data encrypted under a shared
key, and its payoff accrues over every round the key protects.

\paragraph{Rational cryptography.}
Rational secret sharing \cite{HalpernTeague04,AbrahamDolevGonenHalpern06} and
rational cryptography more broadly \cite{MiltersenNielsenTriandopoulos09} ask
when cryptographic protocols implement desired equilibria among self-interested
parties. There the parties reconstruct or compute a secret among themselves.
Here an authorized coalition sells a functional decoder to an outsider, and the
difficulty is the continuation value of a decoder reused across contexts, which
one-shot reconstruction never faces.

\paragraph{Threshold decryption, accountability, and context binding.}
Ferveo \cite{BebelOjha22}, Shutter \cite{DziembowskiFaustLuhn24}, and BlindPerm
\cite{KavousiLeJovanovicDanezis23} hide transaction content until its order is
fixed, and ETHTID \cite{StengeleRaiberMuellerQuadeHartenstein21} combines
scheduled disclosure with stake-backed penalties. Threshold traitor tracing
\cite{BonehPartapRotem23} traces a leaked decoder to a member of the quorum that
produced it, and context-dependent threshold decryption
\cite{BonehBunzNayakRotemShoup25} ensures that decryption shares issued under
different contexts cannot be combined, a guarantee against fewer than a
threshold of corrupt parties. These works build and analyze the cryptography.
In our model their accountability mechanisms determine the exposure $\kappa$,
and their keying choices, together with what a quorum can hand to a buyer,
determine the reachability sets (\Cref{sec:apps-mempool}).

\paragraph{MEV and order fairness.}
Transaction reordering and front-running were identified by
Daian et~al.~\cite{DaianGoldfederKellLiZhaoBentovBreidenbachJuels19} and measured at scale
by Torres et~al.~\cite{TorresCaminoState21}; Kulkarni et~al.~\cite{KulkarniDiamandisChitra22} study their
game-theoretic structure in constant-function market makers. Order fairness,
inclusion lists, and protected order flow are analyzed in
\cite{LiHuWangDuanWang23,WadhwaMaThieryMonnotZanoliniZhangNayak25,BabelJeanLouisJiMisraKelkarMudiyanselageMillerJuels24},
the economics of mempool privacy in \cite{RondeletKilbourn23}, and mitigation
more broadly in the surveys
\cite{AlipanahlooHafidZhang24,MaterwalaNaikTahaAbedSvetinovic24}. Accountable
mempools such as L{\O} \cite{NasrulinIshmaevDecouchantPouwelse23} act on the
exposure $\kappa$ or on its time-distributed version. We take the stakes of
these applications as the stage values $v_t(S)$ and study the leak that
precedes honest release.

\section{Model and Preliminaries}
\label{sec:model}

\subsection{Access Structures and Repeated Contexts}

We fix a set of parties $\Players = \{1,\dots,n\}$.

\begin{definition}[Monotone access structure]
A family $\A \subseteq 2^{\Players}\setminus\{\varnothing\}$ is a \emph{monotone access structure} if
$S \in \A$ and $S \subseteq S'$ imply $S' \in \A$. A coalition $S$ is
\emph{authorized} if $S \in \A$ and \emph{unauthorized} otherwise.
\end{definition}

In the language of cooperative game theory, $\A$ is the family of winning
coalitions of a monotone simple game \cite{Shapley62,TaylorZwicker99}. A
$k$-out-of-$n$ policy is the weighted voting game with unit weights and quota
$k$, and a committee whose shares are weighted by stake is a general weighted
voting game.

The parties run a service over rounds $t=1,2,\dots,T$, with horizon
$T \in \N \cup \{\infty\}$ and discount factor $\delta\in(0,1]$ per round.
Round $t$ has a public context tag $\chi_t$, whatever a leaked capability can be
bound to (an epoch or batch in an encrypted mempool, an auction instance, a
training round), and hidden data that are due to become public at that round's
honest release point.

\subsection{Leakage Values and One-Time Accountability Exposure}

For each coalition $S \subseteq \Players$ and round $t$, let
$v_t(S) \in \mathbb{R}_{\ge 0}$ denote the coalition's \emph{stage
value}: its transferable-utility gain from an outside buyer's unauthorized early
access to round $t$'s hidden data, whether a leak by $S$ at round $t$ or an
earlier one provides that access. It is a summary statistic that may absorb side
payments from a buyer, front-running rents, or any other value of early access,
and the theorems need no decomposition of it.

Accountability is captured by a public \emph{one-time exposure function}
$\kappa: 2^{\Players} \to \mathbb{R}_{\ge 0}$: a coalition $S$ that leaks bears
the loss $\kappa(S)$, measured in present value at the round of the leak. This
loss may represent slashable stake, legal penalties, lost future rewards,
reputational damage, or an effective expected penalty after probabilistic
tracing.

\begin{remark}
The exposure is an effective quantity. If tracing identifies at least one member
of $S$ with probability $p(S)$ and the present-value fine after a successful
trace is $f(S)$, the risk-neutral exposure is $\kappa(S)=p(S)f(S)$, the
coalition's total present-value liability from the leak.
\end{remark}

\subsection{Reachability Semantics}
\label{sec:reach}

The central modeling object is the set of future contexts compromised by one
leak. A \emph{successful leak}, the sale of a functional decoder to a buyer, is
public, and we assume that it is terminal: the leaking parties take no further
part in the protocol, as when they are ejected from the committee. Each
coalition therefore leaks at most once.

\begin{definition}[Leakage reachability]
For each authorized coalition $S$ and round $t$, let $\Rset_S(t)$, a subset of
$\{t,t+1,\dots,T\}$ that contains $t$, be the set of rounds whose hidden data
become accessible to $S$'s buyer as a result of a successful leak by $S$ in
round $t$. For unauthorized coalitions we set $\Rset_S(t)=\varnothing$.
\end{definition}

Since the leak is terminal, $\Rset_S(t)$ records everything it yields: every
later round when the leaked object stays valid forever, and round $t$ alone
when nothing the coalition can leak opens another context.

\begin{definition}[Discounted leakage value]
\label{def:lambda}
For every coalition $S$ and round $t$, define
\[
\LambdaS_t(S) = \sum_{r \in \Rset_S(t)} \delta^{r-t} v_r(S)
\in[0,\infty].
\]
Throughout, we assume that $\sup_{t\leq T}\LambdaS_t(S)<\infty$ for every
$S\in\A$. The assumption holds automatically when $T<\infty$, and when
$T=\infty$ with $\delta<1$ and bounded stage values; when $T=\infty$ and
$\delta=1$, it requires the undiscounted sums $\sum_{r\in\Rset_S(t)}v_r(S)$ to
be bounded in $t$.
\end{definition}

Thus $\LambdaS_t(S)$ is the present-value benefit, at round $t$, of a leak at
round $t$. When the supremum is infinite, no finite one-time exposure deters
$S$.

\begin{definition}[Reusable and context-isolating schemes]
We say that a deployment is:
\begin{enumerate}[label=(\roman*),leftmargin=2em]
\item \emph{fully reusable} if $\Rset_S(t)=\{t,t+1,\dots,T\}$ for every
authorized coalition $S$ and every round $t$;
\item \emph{context-isolating} if $\Rset_S(t)=\{t\}$ for every authorized
coalition $S$ and every round $t$.
\end{enumerate}
\end{definition}

\begin{remark}
Context isolation, which we also call context binding, says that nothing an
authorized coalition can leak in context $\chi_t$ is useful in any context
$\chi_{t'}\neq\chi_t$. Context-dependent threshold decryption
\cite{BonehBunzNayakRotemShoup25} ensures that decryption shares issued under
different contexts cannot be combined, which protects a ciphertext against fewer
than a threshold of corrupt parties. An authorized quorum, however, also holds
long-lived key shares that decrypt under every context. The scheme is
context-isolating when these shares sit in \emph{revocable modules}, which
decrypt only ciphertexts bound at encryption to the current context and which the
protocol revokes and replaces before the next context once a leak becomes
public, whoever supplied it; shares that merely cannot be exported do not
suffice, because the modules themselves can be handed to a buyer.
\end{remark}

\subsection{Deviation Benchmark}

The object of interest is whether honest
execution is safe against coalition deviations under transferable utility.

\begin{definition}[Deterrence and coalition safety]
Fix $(\A,\{v_t\},\kappa,\{\Rset_S(t)\},\delta)$. A coalition
$S\subseteq\Players$ is \emph{deterred} if, in every round $t$ and after every
public history in which no successful leak has occurred before round $t$, no
deviation strategy gives $S$ a positive gain in transferable utility, in present
value at round $t$, over continuing to behave honestly forever. Honest execution
is \emph{coalition-safe} if every coalition is deterred.
\end{definition}

Two features of this benchmark depart from a plain repeated game. Utility is
transferable within a coalition, so a coalition deviates as a single player: a
deviation by $S$ is a leak made jointly by all members of $S$, who share its
value and bear $\kappa(S)$, and a leak supplied by a proper subset
$S_1\subsetneq S$ is a deviation of $S_1$. This attribution is a modeling
choice: if tracing exposes only the members whose shares were used, a coalition
could leak through a minimal authorized subset and share the gain, so
deterrence would have to hold against that subset's exposure.
Exposure is charged once, at the leak, and actions other than leaking carry no
payoff; \Cref{sec:beyond} replaces the one-time
charge by probabilistic and time-distributed exposure. The games played after
plaintexts become public are outside the model.

\section{Main Results}
\label{sec:main}

The first theorem gives the exact deterrence threshold in the general
reachability model. Appendix~\ref{sec:proofs} proves the statements of this section
in order.

\begin{theorem}[Exact characterization]
\label{thm:master}
A coalition $S$ is deterred if and only if
\[
\kappa(S) \geq \sup_{t \leq T}\LambdaS_t(S).
\]
Every unauthorized coalition meets this condition, so honest execution is
coalition-safe if and only if it holds for every $S\in\A$.
\end{theorem}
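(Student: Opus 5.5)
The plan is to first reduce an arbitrary deviation strategy of $S$ to the choice of a leak time, as the introduction announces for \Cref{lem:normal}, and then compare the value of a leak at each time with $\kappa(S)$. Fix a round $t$ and a public history with no successful leak before $t$. Under honest behavior forever, $S$ earns nothing beyond the honest baseline, so we normalize the honest continuation payoff to $0$. Actions other than leaking carry no payoff. A leak is terminal and each coalition leaks at most once. Hence any deviation strategy $\sigma$, possibly randomized and history-dependent, induces a random first leak time $\tau\in\{t,t+1,\dots,T\}\cup\{\infty\}$ ($\tau=\infty$ meaning no leak). Its payoff in present value at round $t$ is
\[
U_t(\sigma)=\mathbb{E}\bigl[\delta^{\tau-t}\bigl(\LambdaS_\tau(S)-\kappa(S)\bigr)\mathbf{1}\{\tau<\infty\}\bigr].
\]
Two features of the model justify this formula. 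First, the buyer's access after a leak at $\tau$ is exactly the rounds in $\Rset_S(\tau)$, which yields $\sum_{r\in\Rset_S(\tau)}\delta^{r-t}v_r(S)=\delta^{\tau-t}\LambdaS_\tau(S)$. Second, $\kappa(S)$ is charged once, in present value at $\tau$. Since no successful leak by $S$ occurred before $\tau$, no stage value of $S$ is earned before $\tau$. The bounded-supremum assumption of \Cref{def:lambda} makes the expectation well defined: the integrand is bounded above by $\sup_t\LambdaS_t(S)<\infty$ and below by $-\kappa(S)$.

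\emph{Sufficiency.} Suppose $\kappa(S)\ge\sup_{t}\LambdaS_t(S)$. Then each term $\delta^{\tau-t}(\LambdaS_\tau(S)-\kappa(S))$ is nonpositive because $\delta^{\tau-t}\ge 0$. So $U_t(\sigma)\le 0$ for every $\sigma$, every $t$, and every leak-free history, and $S$ is deterred.

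\emph{Necessity.} Suppose $\kappa(S)<\LambdaS_{t_0}(S)$ for some $t_0\le T$. A leak-free public history up to $t_0$ exists, namely honest play. After it, consider the deviation ``leak at round $t_0$.'' Its value at $t_0$ is $\LambdaS_{t_0}(S)-\kappa(S)>0$, so $S$ is not deterred at round $t_0$. The definition quantifies over every round and every leak-free history, so we may evaluate at $t_0$ itself and need no discounting back to an earlier round. This is what makes the supremum over all $t$, rather than a discounted comparison from round $1$, the exact threshold.

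For unauthorized $S$ we have $\Rset_S(t)=\varnothing$, so $\LambdaS_t(S)=0\le\kappa(S)$ because $\kappa\ge 0$. This gives the final clause, since coalition safety is deterrence of every coalition.

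The main obstacle is the first step: making rigorous that an arbitrary, possibly randomized and history-contingent strategy is valued exactly by the distribution of its first leak time. This is the role of \Cref{lem:normal}. My approach would be to argue pathwise, using terminality and the attribution convention. Leaks by proper subsets are other coalitions' deviations, and non-leak actions are payoff-irrelevant, so on each realized path the payoff depends only on $\tau$. Taking expectations then gives the displayed formula. When $T=\infty$, some care is needed with $\tau=\infty$ and with interchanging the sum and the expectation. I would handle this using the boundedness assumption of \Cref{def:lambda} and nonnegativity of $v_r$ together with Tonelli's theorem.
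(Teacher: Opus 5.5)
Your proposal is correct and follows the paper's proof: you reduce every deviation to its first leak time (the role of \Cref{lem:normal}), get sufficiency because each leak value $\delta^{\tau-t}(\LambdaS_\tau(S)-\kappa(S))$ is nonpositive and randomized strategies only average these, and get necessity by staying honest until a round $t^\star$ with $\LambdaS_{t^\star}(S)>\kappa(S)$ and leaking there. The one difference is that you evaluate that deviation at $t^\star$, while the paper evaluates it at round $1$; both work, and since $\delta>0$ the factor $\delta^{t^\star-1}$ cannot change the sign, so your remark that evaluating at $t^\star$ is ``what makes the supremum the exact threshold'' overstates its role.
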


\paragraph{Proof idea.}
We fix a coalition $S$ and a deviation strategy, and we let $\tau$ be the round
of its leak. Because the leak is terminal, the buyer reads exactly the rounds in
$\Rset_S(\tau)$, and the exposure is charged once, at $\tau$; so the deviation
is worth $\LambdaS_\tau(S)-\kappa(S)$ in present value at $\tau$
(\Cref{lem:normal}). The condition of \Cref{thm:master} makes every such value
nonpositive, which gives sufficiency. For necessity, a coalition with
$\LambdaS_t(S)>\kappa(S)$ plays honestly until round $t$ and leaks there. The
dynamics enter only through the choice of $\tau$, and the reachability set
records every consequence of the leak, so the theorem needs no model of the
market that follows it.

The remaining results answer four design questions: what one-shot reasoning
misses (\Cref{cor:oneshot,cor:reusable,cor:context}), how large the gap can be
(\Cref{thm:separation,prop:reuse-mult}), what bounded reuse buys
(\Cref{cor:bounded-reuse}), and how to split the requirement across parties
(\Cref{cor:uniform,cor:hetero-lp}). The one-shot rule is exact with a single
round and under context binding, while a capability that never expires makes
the exposure cover the whole remaining stream.

\begin{corollary}[One-shot threshold]
\label{cor:oneshot}
In a one-round environment, honest execution is coalition-safe if and only if
$\kappa(S)\geq v_1(S)$ for every $S\in\A$.
\end{corollary}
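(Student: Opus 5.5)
The corollary is the case $T=1$ of \Cref{thm:master}, so the plan is to specialize that theorem and evaluate the supremum directly.

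With $T=1$, the only round is $t=1$, so $\sup_{t\leq T}\LambdaS_t(S)=\LambdaS_1(S)$. For an authorized coalition $S$, the definition of leakage reachability requires $\Rset_S(1)$ to be a subset of $\{1,\dots,T\}=\{1\}$ that contains $1$. Hence $\Rset_S(1)=\{1\}$. By \Cref{def:lambda},
\[
\LambdaS_1(S)=\delta^{0}v_1(S)=v_1(S).
\]
This holds for every $\delta\in(0,1]$. The finiteness assumption is automatic because $T<\infty$.

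Substituting into \Cref{thm:master}, an authorized coalition $S$ is deterred if and only if $\kappa(S)\geq v_1(S)$. The theorem also states that every unauthorized coalition meets its condition: there $\Rset_S=\varnothing$, so $\LambdaS_1(S)=0\leq\kappa(S)$, using $\kappa\geq 0$. Coalition safety is therefore equivalent to $\kappa(S)\geq v_1(S)$ for all $S\in\A$, which is the claim.

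There is no real obstacle here. The only point to state explicitly is that the reachability set is forced to be $\{1\}$ by the horizon constraint, so the result does not depend on assuming context isolation.
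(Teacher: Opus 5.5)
Your proposal is correct and takes the same approach as the paper's proof: with $T=1$ the reachability set of an authorized coalition is forced to be $\{1\}$, so $\LambdaS_1(S)=v_1(S)$, and \Cref{thm:master} then gives the claim. Your extra remarks on unauthorized coalitions and on $\delta^0=1$ are correct and are already covered by the theorem.
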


\begin{corollary}[Fully reusable leakage]
\label{cor:reusable}
Suppose the deployment is fully reusable. Then honest execution is
coalition-safe if and only if
\[
\kappa(S)
\geq
\sup_{t \leq T}\sum_{r=t}^{T}\delta^{r-t} v_r(S)
\qquad\text{for every } S \in \A.
\]
In particular, if $T=\infty$, $\delta<1$, and $v_r(S)=v(S)$ for all $r$, then
the exact threshold is ${\kappa(S)\geq v(S)/(1-\delta)}$.
\end{corollary}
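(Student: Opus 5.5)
The plan is to specialize \Cref{thm:master} directly. Under full reusability, $\Rset_S(t)=\{t,t+1,\dots,T\}$ for every authorized $S$ and every $t$. Substituting this into \Cref{def:lambda} gives
\[
\LambdaS_t(S)=\sum_{r=t}^{T}\delta^{r-t}v_r(S).
\]
\Cref{thm:master} says a coalition $S$ is deterred exactly when $\kappa(S)\geq\sup_{t\leq T}\LambdaS_t(S)$. It also says unauthorized coalitions satisfy this automatically, since $\Rset_S(t)=\varnothing$ makes $\LambdaS_t(S)=0\leq\kappa(S)$. Coalition safety therefore reduces to the displayed inequality for every $S\in\A$. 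This is the whole first claim, with no new argument beyond the substitution.

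For the stationary special case, take $T=\infty$, $\delta<1$ and $v_r(S)=v(S)$. Then each $\LambdaS_t(S)$ is the geometric series
\[
\sum_{r=t}^{\infty}\delta^{r-t}v(S)=\frac{v(S)}{1-\delta}.
\]
This value is independent of $t$, so the supremum equals it and is finite. The finiteness assumption of \Cref{def:lambda} is therefore met, and the threshold is $\kappa(S)\geq v(S)/(1-\delta)$.

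I expect no real obstacle. The only points to check are that the standing assumption $\sup_t\LambdaS_t(S)<\infty$ holds in the special case, which the geometric sum confirms, and that with $T=\infty$ the supremum ranges over all $t\in\N$. Both are immediate.
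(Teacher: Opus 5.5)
Your proposal is correct and matches the paper's proof: substitute $\Rset_S(t)=\{t,\dots,T\}$ into \Cref{def:lambda}, apply \Cref{thm:master}, and sum the geometric series for the stationary case. Your extra checks, that the supremum is finite in the special case and that unauthorized coalitions are automatic, are harmless additions that the paper leaves implicit or inherits from \Cref{thm:master}.
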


This requirement grows without bound as $\delta\to 1$.

\begin{corollary}[Context isolation collapses continuation value]
\label{cor:context}
In a context-isolating deployment, honest execution is coalition-safe if and
only if $\kappa(S)\geq\sup_{t\leq T}v_t(S)$ for every $S\in\A$.
\end{corollary}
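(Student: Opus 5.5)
This follows from \Cref{thm:master} once the definition of $\LambdaS_t(S)$ is evaluated under context isolation. By \Cref{thm:master}, honest execution is coalition-safe if and only if $\kappa(S)\geq\sup_{t\leq T}\LambdaS_t(S)$ for every $S\in\A$. It therefore suffices to show that, in a context-isolating deployment, $\LambdaS_t(S)=v_t(S)$ for every authorized $S$ and every round $t\leq T$. Substituting this identity into the condition of \Cref{thm:master} gives exactly the stated criterion $\kappa(S)\geq\sup_{t\leq T}v_t(S)$.

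The identity comes from \Cref{def:lambda}. Context isolation means $\Rset_S(t)=\{t\}$ for every $S\in\A$. The sum defining $\LambdaS_t(S)$ then has the single term $r=t$, with discount weight $\delta^{0}=1$, so $\LambdaS_t(S)=v_t(S)$. The corollary quantifies only over $S\in\A$, which matches the last clause of \Cref{thm:master}. That clause also covers unauthorized coalitions: their reachability sets are empty, so their $\LambdaS$ is $0$ and they are deterred automatically.

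I would add one remark on the finiteness assumption of \Cref{def:lambda}. Here $\sup_t\LambdaS_t(S)<\infty$ means the stage values $v_t(S)$ are bounded in $t$. If they are unbounded, neither side of the equivalence can hold for a finite $\kappa(S)$, so the two conditions still agree.

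There is no real obstacle. The only care needed is the bookkeeping check that the singleton reachability set leaves no discounting factor, since $\delta^{t-t}=1$, and that the supremum ranges over the same index set $t\leq T$ as in \Cref{thm:master}.
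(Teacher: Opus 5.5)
Your proposal is correct and matches the paper's proof: context isolation gives $\Rset_S(t)=\{t\}$, hence $\LambdaS_t(S)=v_t(S)$, and \Cref{thm:master} finishes the argument. Your remark on unbounded stage values is also sound, because the necessity half of \Cref{thm:master} only needs some $t$ with $\LambdaS_t(S)>\kappa(S)$ and does not use finiteness of the supremum.
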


The gap between these two designs can be as large as the horizon.

\begin{theorem}[Sharp reusable-versus-context-isolating separation]
\label{thm:separation}
Let $T\in\N$, and let $S^\star$ be an authorized coalition with
$v_t(S^\star)=1$ for every $t\leq T$ in an instance without discounting
($\delta=1$). Then:
\begin{enumerate}[label=(\roman*),leftmargin=2em]
\item under fully reusable leakage, $S^\star$ is deterred if and only if
$\kappa(S^\star)\geq T$;
\item under context isolation, $S^\star$ is deterred if and only if
$\kappa(S^\star)\geq 1$.
\end{enumerate}
Moreover, $\sup_{t\leq T}\sum_{r=t}^{T}\delta^{r-t}v_r(S)\leq T\sup_{t\leq T}
v_t(S)$ for every $T$-round instance and every coalition $S$, so no instance
separates the two thresholds by more than a factor of $T$.
\end{theorem}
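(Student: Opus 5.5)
The plan is to derive both parts directly from \Cref{thm:master}, or equivalently from \Cref{cor:reusable,cor:context}, by evaluating $\sup_{t\leq T}\LambdaS_t(S^\star)$ under each keying design. Throughout, $\delta=1$ and $v_t(S^\star)=1$ for every $t\leq T$.

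For part (i), full reusability gives $\Rset_{S^\star}(t)=\{t,\dots,T\}$. Hence $\LambdaS_t(S^\star)=\sum_{r=t}^{T}1=T-t+1$. This is decreasing in $t$, so the supremum is attained at $t=1$ and equals $T$. \Cref{thm:master} then says $S^\star$ is deterred iff $\kappa(S^\star)\geq T$.

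For part (ii), context isolation gives $\Rset_{S^\star}(t)=\{t\}$. Hence $\LambdaS_t(S^\star)=v_t(S^\star)=1$ for every $t$, the supremum is $1$, and \Cref{thm:master} (or \Cref{cor:context}) yields the threshold $\kappa(S^\star)\geq 1$. Both parts are immediate; the only care needed is that \Cref{thm:master} applies to a single coalition $S^\star$ and does not require the other coalitions to be deterred.

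For the ``moreover'' bound, fix any $T$-round instance, any $\delta\in(0,1]$ and any coalition $S$, and put $M=\sup_{t\leq T}v_t(S)$, which is finite since $T<\infty$. For each $t$, use $\delta^{r-t}\leq 1$ and $v_r(S)\leq M$ to get
\[
\sum_{r=t}^{T}\delta^{r-t}v_r(S)\leq (T-t+1)M\leq TM.
\]
Taking the supremum over $t$ gives the claim. Since the context-isolating threshold is exactly $\sup_t v_t(S)$ by \Cref{cor:context}, and the reusable threshold is the left-hand side by \Cref{cor:reusable}, the ratio of the two thresholds is at most $T$. Part (i) shows the factor $T$ is attained, so the separation is sharp.

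None of these steps is a real obstacle. The one point to state carefully is the interpretation of ``separates by more than a factor of $T$'' when $\sup_t v_t(S)=0$: in that case both thresholds are $0$ and the inequality holds trivially.
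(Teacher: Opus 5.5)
Your proposal is correct and follows the paper's proof essentially step for step. You compute $\sup_{t\leq T}\LambdaS_t(S^\star)$ as $T$ under full reuse and as $1$ under context isolation, invoke \Cref{thm:master}, and then bound each tail sum by $(T-t+1)\sup_{t'\leq T}v_{t'}(S)\leq T\sup_{t'\leq T}v_{t'}(S)$ using $\delta^{r-t}\leq 1$; your remarks on the degenerate case $\sup_t v_t(S)=0$ and on sharpness via part (i) are harmless additions.
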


A committee that budgets one round of value under a long-lived key is short by a
factor of $T$, because one leak collects every round; in every design the gap
is driven by how many rounds one leak can reach.

\begin{proposition}[Reuse multiplicity threshold]
\label{prop:reuse-mult}
Fix an authorized coalition $S$, let $\rho(S)=\sup_{t\leq T}|\Rset_S(t)|$, and
suppose $\delta=1$ and $v_r(S)=1$ for every round $r$. Then $S$ is deterred if
and only if $\kappa(S)\geq\rho(S)$. Consequently, along a family of deployments
in which $\rho(S)$ grows without bound, no exposure that is constant in the
horizon deters $S$.
\end{proposition}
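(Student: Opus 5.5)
The plan is to specialize \Cref{thm:master} directly. With $\delta=1$ and $v_r(S)=1$ for every round $r$, \Cref{def:lambda} gives
\[
\LambdaS_t(S)=\sum_{r\in\Rset_S(t)} 1^{\,r-t}\cdot 1=|\Rset_S(t)|
\]
for every $t\leq T$. Taking the supremum over $t$ yields $\sup_{t\leq T}\LambdaS_t(S)=\rho(S)$. \Cref{thm:master} then states that $S$ is deterred if and only if $\kappa(S)\geq\rho(S)$, which is the first claim. The standing finiteness assumption of \Cref{def:lambda} means $\rho(S)<\infty$ for each fixed deployment. If one prefers not to invoke that assumption, note that when $\rho(S)=\infty$ the condition $\kappa(S)\geq\rho(S)$ fails for every real $\kappa(S)$. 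This is consistent with the remark after \Cref{def:lambda} that no finite exposure deters $S$ when the supremum is infinite. The necessity direction of \Cref{thm:master} still applies here: some $t$ has $|\Rset_S(t)|>\kappa(S)$, and leaking at that $t$ is profitable.

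For the consequence, I would fix any constant $c\geq 0$ and consider a family of deployments, indexed for instance by the horizon, in which $\rho(S)$ is unbounded. Some member of the family has $\rho(S)>c$. By the first part, exposure $\kappa(S)=c$ does not deter $S$ in that deployment. Hence no exposure independent of the family index deters $S$ throughout the family. The natural example is full reuse with $T\to\infty$ through finite horizons, where $\rho(S)=T$ by \Cref{thm:separation}(i). That example is a sanity check, not part of the proof.

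There is no real obstacle; the only care needed is bookkeeping. First, $\rho$ is a supremum that need not be attained, but the iff is stated against the supremum, matching \Cref{thm:master}. Second, $S$ must be authorized, so that $\Rset_S(t)\ni t$ is nonempty and the count is meaningful. For unauthorized $S$, $\rho(S)=0$ and the statement is trivially true as well.
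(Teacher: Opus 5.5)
Your proposal is correct and matches the paper's proof: with $\delta=1$ and unit stage values each reachable round contributes exactly $1$, so $\LambdaS_t(S)=|\Rset_S(t)|$, the supremum is $\rho(S)$, and \Cref{thm:master} gives the threshold. The consequence follows because the threshold eventually exceeds any fixed constant. Your added remarks on an unattained or infinite supremum are consistent with the paper's setup and need no separate argument, since the standing assumption already rules out $\rho(S)=\infty$.
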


A designer who cannot bound $\rho(S)$ cannot certify deterrence with a fixed
exposure, while a bound on it gives the closed form below; there $\rho(S)$
denotes any such bound, which may exceed the supremum of \Cref{prop:reuse-mult}.

\begin{corollary}[Bounded reuse multiplicity under discounting]
\label{cor:bounded-reuse}
Fix an authorized coalition $S$ and suppose that
$|\Rset_S(t)| \leq \rho(S)$ for every round $t$. Assume also that
$v_r(S)\leq \bar v(S)$ for every round $r$. Then $S$ is deterred whenever
\[
\kappa(S)
\geq
\bar v(S)\sum_{j=0}^{\rho(S)-1}\delta^j
=
\begin{cases}
\bar v(S)\dfrac{1-\delta^{\rho(S)}}{1-\delta}, & \delta<1,\\[0.8em]
\rho(S)\bar v(S), & \delta=1.
\end{cases}
\]
Moreover, this bound is tight whenever some round $t$ satisfies
$\Rset_S(t)=\{t,t+1,\dots,t+\rho(S)-1\}$ and
$v_{t+j}(S)=\bar v(S)$ for every $j=0,\dots,\rho(S)-1$.
\end{corollary}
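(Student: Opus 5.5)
The plan is to reduce everything to \Cref{thm:master}. Sufficiency then amounts to a uniform upper bound on $\LambdaS_t(S)$, and tightness to exhibiting one round $t$ at which $\LambdaS_t(S)$ equals the bound.

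\emph{Sufficiency.} Fix any round $t\leq T$. By definition $\Rset_S(t)\subseteq\{t,t+1,\dots,T\}$ contains $t$, and it has at most $\rho(S)$ elements. List its elements in increasing order as $r_0<r_1<\dots<r_{m-1}$ with $m\leq\rho(S)$.

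\begin{itemize}
\item Since these are distinct integers with $r_0\geq t$, induction gives $r_j\geq t+j$.
\item Since $\delta\in(0,1]$, this yields $\delta^{r_j-t}\leq\delta^{j}$.
\item Combining with $v_{r_j}(S)\leq\bar v(S)$ and nonnegativity of all terms,
\[
\LambdaS_t(S)=\sum_{j=0}^{m-1}\delta^{r_j-t}v_{r_j}(S)\leq\bar v(S)\sum_{j=0}^{m-1}\delta^{j}\leq\bar v(S)\sum_{j=0}^{\rho(S)-1}\delta^{j}.
\]
\end{itemize}

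Taking the supremum over $t$, the hypothesis on $\kappa(S)$ gives $\kappa(S)\geq\sup_t\LambdaS_t(S)$, and \Cref{thm:master} yields deterrence. The two closed forms are the finite geometric sum for $\delta<1$ and the count $\rho(S)$ for $\delta=1$. The bound $\sum_j\delta^j\leq\rho(S)$ is immediate because each $\delta^j\leq 1$.

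\emph{Tightness.} Read "tight" as: no smaller exposure deters $S$. Take the round $t$ with $\Rset_S(t)=\{t,\dots,t+\rho(S)-1\}$ and with every value in it equal to $\bar v(S)$. Then $\LambdaS_t(S)=\bar v(S)\sum_{j=0}^{\rho(S)-1}\delta^j$ exactly. Any $\kappa(S)$ strictly below this bound therefore violates $\kappa(S)\geq\sup_t\LambdaS_t(S)$, and the necessity direction of \Cref{thm:master} shows $S$ is not deterred. Combined with sufficiency, the bound is then the exact threshold for such instances.

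\emph{Main obstacle.} The only step needing care is the sufficiency inequality $\delta^{r_j-t}\leq\delta^{j}$. Reachable rounds need not be consecutive, so the bound must come from the ordering argument, not from assuming $\Rset_S(t)$ is an interval; $\delta\leq1$ is what lets gaps only decrease the sum. I would also note that $\rho(S)$ here is only an upper bound and may be infinite. If $\rho(S)=\infty$ with $\delta<1$, the same argument gives $\bar v(S)/(1-\delta)$, but the statement implicitly takes $\rho(S)$ finite, so I would restrict to that case.
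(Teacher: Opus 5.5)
Your proof is correct and follows the paper's argument. For sufficiency you bound each $\LambdaS_t(S)$ by $\bar v(S)\sum_{j=0}^{\rho(S)-1}\delta^j$ and invoke \Cref{thm:master}; for tightness you note that both inequalities are equalities at the round whose reachability set is consecutive with full values. Your ordering step $r_j\geq t+j$ just makes explicit the paper's remark that a discounted sum over $|\Rset_S(t)|$ rounds is largest when those rounds come first.
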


Because key rotation every $\rho(S)$ rounds keeps every reachability set within
$\rho(S)$ rounds when no leak can include a later key, it caps the multiplier at
$\rho(S)$ for every $\delta\leq 1$. For the common equal-stake threshold committee, the
general theorem becomes an exact formula.

\begin{corollary}[Uniform exposure for $k$-out-of-$n$ committees]
\label{cor:uniform}
Suppose $\A=\{S \subseteq \Players : |S|\geq k\}$ for some $1\leq k\leq n$ and
$\kappa(S)=d|S|$ for a common per-party exposure $d \geq 0$. Then the
minimal such $d$ that makes honest execution coalition-safe is
\[
d^\star
=
\max_{S \in \A}
\frac{\sup_{t \leq T}\LambdaS_t(S)}{|S|}.
\]
If moreover $\sup_t \LambdaS_t(S)$ depends only on $|S|$, say
$\sup_t \LambdaS_t(S)=\lambda_{|S|}$, then
\[
d^\star
=
\max_{k\leq m\leq n}\frac{\lambda_m}{m}.
\]
\end{corollary}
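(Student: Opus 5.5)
We derive \Cref{cor:uniform} directly from \Cref{thm:master}. Write $L(S)=\sup_{t\leq T}\LambdaS_t(S)$, which is finite for every $S\in\A$ by the standing assumption of \Cref{def:lambda}.

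Under $\kappa(S)=d|S|$, \Cref{thm:master} says that honest execution is coalition-safe if and only if $d|S|\geq L(S)$ for every $S\in\A$. Every authorized coalition is nonempty, so we may divide by $|S|\geq 1$. The condition then becomes $d\geq L(S)/|S|$ for all $S\in\A$.

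The family $\A$ is finite, being a subset of $2^{\Players}$, and every ratio $L(S)/|S|$ is finite. So the maximum $d^\star=\max_{S\in\A}L(S)/|S|$ exists. The set of deterring $d$ is exactly $[d^\star,\infty)$, which shows that $d^\star$ is attained and is minimal. We also have $d^\star\geq 0$ because each $L(S)\geq 0$, so $d^\star$ is admissible as a nonnegative exposure.

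For the second formula, suppose $L(S)=\lambda_{|S|}$. The authorized coalitions are exactly the subsets of size $m$ for $k\leq m\leq n$, and each such size occurs, since $k\leq n$. Grouping the maximum by size gives
\[
\max_{S\in\A}\frac{\lambda_{|S|}}{|S|}=\max_{k\leq m\leq n}\frac{\lambda_m}{m}.
\]

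The only point needing care is that the maximum is attained and finite, so that the minimum exists rather than just an infimum. Finiteness of $\A$ and the boundedness assumption on $L(S)$ settle this. The rest of the argument is a direct rewriting of \Cref{thm:master}.
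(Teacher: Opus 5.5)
Your proof is correct and follows essentially the same argument as the paper's own proof: apply \Cref{thm:master} with $\kappa(S)=d|S|$, divide by $|S|\geq 1$, take the maximum over the finite family $\A$ (finite by the standing assumption of \Cref{def:lambda}), and group coalitions by size. You also make explicit three small points the paper leaves implicit: the deterring set is exactly $[d^\star,\infty)$, $d^\star\geq 0$, and every size from $k$ to $n$ occurs, which also guarantees $\A\neq\varnothing$.
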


The binding coalition is the size at which reachable value per member peaks,
not necessarily the smallest; with unequal amounts the cheapest profile solves a
covering program.

\begin{corollary}[Heterogeneous additive exposure as a covering LP]
\label{cor:hetero-lp}
Suppose the deployment chooses nonnegative per-party exposures
$d_1,\dots,d_n$ and coalition exposure is additive:
\[
\kappa(S)=\sum_{i \in S} d_i.
\]
Then honest execution is coalition-safe if and only if
$\sum_{i\in S}d_i\geq\sup_{t\leq T}\LambdaS_t(S)$ for every $S\in\A$.
Consequently, the minimum-total-exposure design problem is the linear program
\[
\begin{aligned}
\min\quad & \sum_{i=1}^{n} d_i\\
\mathrm{s.t.}\quad
& \sum_{i \in S} d_i \geq \sup_{t \leq T}\LambdaS_t(S)
&& \forall S \in \A,\\
& d_i \geq 0
&& \forall i \in \Players.
\end{aligned}
\]
\end{corollary}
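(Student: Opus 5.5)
The plan is to reduce \Cref{cor:hetero-lp} directly to \Cref{thm:master}, treating the additive exposure as a particular choice of the function $\kappa$.

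First, fix nonnegative $d_1,\dots,d_n$ and define $\kappa(S)=\sum_{i\in S}d_i$. This $\kappa$ maps $2^{\Players}$ into $\mathbb{R}_{\ge 0}$, so it is an admissible one-time exposure function. \Cref{thm:master} places no structural requirement on $\kappa$ such as monotonicity or additivity, so it applies verbatim.

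Second, \Cref{thm:master} says that honest execution is coalition-safe if and only if $\kappa(S)\geq\sup_{t\leq T}\LambdaS_t(S)$ for every $S\in\A$. Unauthorized coalitions need no separate treatment: they have $\Rset_S(t)=\varnothing$, hence $\LambdaS_t(S)=0\le\kappa(S)$, and the theorem already records this. Substituting the additive form gives exactly the family of covering constraints $\sum_{i\in S}d_i\ge\sup_t\LambdaS_t(S)$, one for each $S\in\A$. This proves the equivalence.

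Third, the set of profiles $(d_i)$ that make honest execution coalition-safe is therefore the set of nonnegative vectors satisfying these constraints. By the standing assumption in \Cref{def:lambda}, each right-hand side $\sup_t\LambdaS_t(S)$ is a finite constant. Each constraint is linear in $d$, and there are finitely many of them, one for each $S\in\A\subseteq 2^{\Players}$. Minimizing the total exposure $\sum_i d_i$ over this set is therefore precisely the stated linear program.

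For completeness I would also note that the program is feasible and attains its minimum. Setting every $d_i$ equal to the largest right-hand side $M$ is feasible, because every $S\in\A$ is nonempty and so $\sum_{i\in S}d_i\ge M$. The objective is bounded below by $0$, so a linear program that is feasible and bounded has an optimal solution, and ``minimum'' is justified.

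I expect no real obstacle here. The only point needing care is the finiteness of the right-hand sides, which the standing assumption supplies. Without it, some constraint would be infeasible and no finite exposure would deter that coalition, in line with the remark after \Cref{def:lambda}.
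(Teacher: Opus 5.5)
Your proposal is correct and follows the paper's proof: apply Theorem~\ref{thm:master} to $\kappa(S)=\sum_{i\in S}d_i$, use the standing assumption of Definition~\ref{def:lambda} to make each right-hand side finite, and read off the resulting finite family of linear covering constraints as the stated LP. Your added feasibility-and-attainment step, which sets every $d_i$ to the largest right-hand side and uses that each $S\in\A$ is nonempty, is a valid extra that the paper leaves implicit.
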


\Cref{thm:master} needs neither additive nor symmetric exposure nor a threshold
family $\A$, so it covers heterogeneous slashing and weighted or hierarchical
access policies; \Cref{tab:summary} collects the thresholds.

\begin{table}[t]
\centering
\small
\begin{tabularx}{\linewidth}{@{}>{\raggedright\arraybackslash}p{0.42\linewidth}>{\raggedright\arraybackslash}p{0.12\linewidth}>{\raggedright\arraybackslash}X@{}}
\toprule
Setting & $\Rset_S(t)$ & Threshold for coalition $S$ \\
\midrule
Context-isolating & $\{t\}$ & $\sup_t v_t(S)$ \\
Fully reusable & $\{t,\dots,T\}$ &
$\sup_t \sum_{r=t}^T \delta^{r-t} v_r(S)$ \\
Bounded reuse $|\Rset_S(t)|\leq \rho(S)$, $v_r(S)\leq \bar v(S)$ &
arbitrary & at most $\bar v(S)\sum_{j=0}^{\rho(S)-1}\delta^j$ \\
Reuse multiplicity $\rho(S)$, unit values, $\delta=1$ & arbitrary & $\rho(S)$ \\
Equal exposure $d|S|$, $k$-out-of-$n$ & arbitrary & $d^\star = \max_{S \in \A}\sup_t\LambdaS_t(S)/|S|$ \\
Additive exposure $\sum_{i \in S} d_i$ & arbitrary & choose $d_i$ by the LP of \Cref{cor:hetero-lp} \\
\bottomrule
\end{tabularx}
\caption{Thresholds by reachability pattern: exact where available, sharp bounds
otherwise.}
\label{tab:summary}
\end{table}

\section{Beyond a One-Time Exposure}
\label{sec:beyond}

Two relaxations keep the first-leak argument and change only the cost side of
the comparison; both assume risk-neutral coalitions and detection probabilities
and fines that the deployment fixes in advance.

\begin{proposition}[Probabilistic exposure]
\label{prop:prob}
Suppose coalition $S$ is exposed after a successful leak with probability
$p(S)\in[0,1]$, and conditional on exposure pays a deterministic fine
$f(S)\geq 0$. If the coalition is risk neutral, then it is deterred if and only
if $p(S)\,f(S)\geq\sup_{t\leq T}\LambdaS_t(S)$.
\end{proposition}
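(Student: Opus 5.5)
The plan is to reduce \Cref{prop:prob} to \Cref{thm:master}, or more precisely to rerun its first-leak argument with the one-time exposure $\kappa(S)$ replaced by the expected loss $p(S)f(S)$. Fix a coalition $S$ and an arbitrary deviation strategy, after any public history with no prior leak. If the strategy never leaks, it earns exactly what honest play earns, because actions other than leaking carry no payoff. Otherwise, let $\tau$ be the round of its first successful leak; $\tau$ may be random, since it can depend on the history. The leak is terminal, so by \Cref{lem:normal} the buyer reads exactly the rounds in $\Rset_S(\tau)$, and the gross benefit of the leak in present value at $\tau$ is $\LambdaS_\tau(S)$.

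The only change from \Cref{thm:master} is on the cost side. Exposure is now a random event: with probability $p(S)$ the coalition pays $f(S)$, and otherwise it pays nothing. Both $p(S)$ and $f(S)$ are fixed in advance, and the exposure draw does not affect which rounds the buyer reads. A risk-neutral coalition therefore values the deviation, conditionally on $\tau=t$, at $\LambdaS_t(S)-p(S)f(S)$ in present value at $t$. Discounting this back to the current round and taking expectations over $\tau$ gives the deviation's value as an average of $\delta^{\tau-t_0}\bigl(\LambdaS_\tau(S)-p(S)f(S)\bigr)$.

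For sufficiency, if $p(S)f(S)\geq\sup_t\LambdaS_t(S)$, every term in that average is nonpositive, so no deviation yields a positive gain. For necessity, if $\LambdaS_t(S)>p(S)f(S)$ for some $t$, the strategy ``behave honestly until round $t$, then leak'' has strictly positive expected value at round $t$, after a no-leak history. So $S$ is not deterred. This is exactly the deterministic case of \Cref{thm:master} with $\kappa(S)=p(S)f(S)$.

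The main point to check is that replacing the realized fine by its expectation is legitimate. This requires two things. First, the exposure event must be independent of, or at least have conditional probability $p(S)$ given, the history up to $\tau$, so that the coalition cannot time its leak to rounds with lower detection. Second, the fine $f(S)$ must be measured in present value at the leak round, just as $\kappa$ is. Both follow from the section's standing assumption that detection probabilities and fines are fixed in advance. With those two facts, linearity of expectation makes the argument go through unchanged.
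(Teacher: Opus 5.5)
Your proposal is correct and follows the paper's proof. Both arguments value a leak at round $\tau$ at $\delta^{\tau-t_0}\bigl(\LambdaS_\tau(S)-p(S)f(S)\bigr)$ by rerunning the first-leak normal form with expected exposure, use the assumption that detection probabilities and fines are fixed in advance to justify taking the expectation, and then conclude from \Cref{thm:master} with $\kappa'(S)=p(S)f(S)$.
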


Tracing mechanisms such as threshold traitor tracing \cite{BonehPartapRotem23}
enter only through $p(S)$.

\begin{proposition}[Time-distributed exposure]
\label{prop:time-dist}
Suppose that if coalition $S$ first leaks at round $t$, then its expected
discounted exposure equals
\[
K_t(S)
=
\sum_{r \in \Rset_S(t)} \delta^{r-t} e_{t,r}(S)
\]
for some nonnegative charges $e_{t,r}(S)$. Then $S$ is deterred if and only if
$K_t(S)\geq\LambdaS_t(S)$ for every $t\leq T$, and honest execution is
coalition-safe if and only if this holds for every $S\in\A$.
\end{proposition}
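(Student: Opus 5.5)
The plan is to reuse the first-leak reduction behind \Cref{thm:master}, replacing the one-time charge $\kappa(S)$ by the leak-time-dependent charge $K_t(S)$. First I fix a coalition $S$, a round $t_0$, and a public history with no successful leak before $t_0$. I also fix an arbitrary deviation strategy of $S$. Since actions other than leaking carry no payoff and a leak is terminal, the strategy induces a (possibly random, history-dependent) first-leak time $\tau\in\{t_0,\dots,T\}\cup\{\infty\}$, where $\tau=\infty$ means $S$ never leaks.

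On the event $\tau=t$, the buyer reads exactly the rounds in $\Rset_S(t)$. This gives gross value $\LambdaS_t(S)$ in present value at $t$. The expected discounted exposure is $K_t(S)$ by hypothesis, so the conditional net value at $t$ is $\LambdaS_t(S)-K_t(S)$. Discounting back to $t_0$ and taking expectations, the gain of the deviation over honesty is
\[
\mathbb{E}\bigl[\delta^{\tau-t_0}\bigl(\LambdaS_\tau(S)-K_\tau(S)\bigr)\mathbf{1}\{\tau<\infty\}\bigr].
\]
This is the analogue of \Cref{lem:normal}, with $\kappa(S)$ replaced by $K_\tau(S)$. Risk neutrality lets me work with the expected charge $K_t(S)$ rather than the realized one. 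The fact that the charges $e_{t,r}(S)$ are spread over the rounds of $\Rset_S(t)$ does not matter once they are summarized by $K_t(S)$ at the leak time.

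\emph{Sufficiency.} If $K_t(S)\geq\LambdaS_t(S)$ for every $t$, every term in the expectation above is nonpositive, so no deviation yields a positive gain from any $t_0$ or history.

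\emph{Necessity.} Suppose $K_t(S)<\LambdaS_t(S)$ for some $t$. Take $t_0=t$ with the honest history up to $t$, and let $S$ leak immediately. This gives a strictly positive gain, so $S$ is not deterred. For an unauthorized $S$, $\Rset_S(t)=\varnothing$, so both $\LambdaS_t(S)=0$ and $K_t(S)=0$ and the condition holds trivially. Hence coalition safety is equivalent to the condition for every $S\in\A$.

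The main obstacle is making the reduction rigorous in two places. The first is infinite horizons, where $\Lambda$ or $K$ could be infinite. I would assume $\sup_t\LambdaS_t(S)<\infty$ as in \Cref{def:lambda}, and treat $K_t(S)=\infty$ as automatically satisfying the condition. The second is justifying the interchange of expectation over random leak times with the discounted sums. Nonnegativity of the $v_r(S)$ and $e_{t,r}(S)$ handles this through monotone convergence.
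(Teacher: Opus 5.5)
Your proposal is correct and follows the paper's proof. Like the paper, you reuse the earliest-leak valuation of \Cref{lem:normal} with $K_\tau(S)$ in place of $\kappa(S)$, get sufficiency because every leak's value is nonpositive, get necessity from one well-timed leak, and dispose of unauthorized coalitions through their empty reachability sets. The differences are cosmetic and both versions are valid under the paper's definition of deterrence: you take expectations over a random leak time directly rather than averaging pure deviations, and you exhibit the profitable leak at $t_0=t$ rather than evaluating it from round $1$ with the factor $\delta^{t-1}$.
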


This covers a fixed charge per compromised round, forfeited future rewards, and
evidence whose detection probability grows with continued reuse;
\Cref{cor:geom-trace} works out a constant per-round tracing hazard. Random
reachability sets and
degraded decoders fit the same comparison
(\Cref{prop:random-reach,prop:modes}).

\paragraph{Long reuse and patient coalitions.}
If a leaked capability stays valid forever and every round is worth at least
$v>0$ to $S$, \Cref{cor:reusable} requires exposure at least $v/(1-\delta)$,
which diverges as $\delta\to 1$. Two designs keep the requirement bounded.
Bounding the lifetime of the leaked object caps the multiplier: when every leak
reaches at most $\rho(S)$ rounds, \Cref{cor:bounded-reuse} asks for at most
$\rho(S)\bar v(S)$ for every $\delta\leq 1$. Liability that grows with use works
from the cost side: a charge $c\geq\sup_r v_r(S)$ per compromised round gives
$K_t(S)=c\sum_{r\in\Rset_S(t)}\delta^{r-t}\geq\LambdaS_t(S)$ at every
$\delta<1$. A fixed one-time loss, reputational or otherwise, does not remove
the divergence.

\section{Instantiation and Applications}
\label{sec:apps}

\subsection{From a Deployment to the Model}
\label{sec:apps-map}

\begin{table}[t]
\centering
\small
\begin{tabularx}{\linewidth}{@{}>{\raggedright\arraybackslash}p{0.12\linewidth}>{\raggedright\arraybackslash}X>{\raggedright\arraybackslash}p{0.40\linewidth}@{}}
\toprule
Input & Source in a deployment & How to bound it \\
\midrule
$\A$ & decryption policy, such as $k$ of $n$ keypers & read off the protocol \\
$\Rset_S(t)$ & scope of the leaked object, custody of key shares, key lifecycle & read off the keying design (\Cref{tab:scheme-mappings}) \\
$v_t(S)$ & surplus that early access to round $t$ creates for a buyer & upper bound: replay order flow with and without early access, plus a tail margin \\
$\kappa(S)$, $K_t(S)$ & slashable stake, tracing probability and penalty, forfeited rewards, legal liability & lower bound \\
$\delta$ & discounting per context & $\delta=1$ over a finite window, or $\delta=e^{-a\Delta}$ for annual rate $a$ and spacing $\Delta$ \\
\bottomrule
\end{tabularx}
\caption{Instantiation map for \Cref{thm:master}.}
\label{tab:instantiation}
\end{table}

\Cref{tab:instantiation} lists where each input of \Cref{thm:master} comes
from; certification needs bounds, not point estimates. For a one-time exposure
the comparison in \Cref{thm:master} is monotone in each input, so an upper bound
on reachable value and a lower bound on exposure certify deterrence; since $\LambdaS_t(S)$ is nondecreasing in $\delta$ and
in $\Rset_S(t)$, a designer unsure of either can certify at an upper bound. A
time-distributed exposure $K_t(S)$ grows with $\delta$ and $\Rset_S(t)$ as well,
so its lower bound must hold over the whole range of $\delta$ and on the
smallest admissible reachability set; a charge of at least $\sup_r v_r(S)$ per
compromised round passes for every $\delta$ and every reachability set.
Competing buyers pay no more than the surplus that early access creates, so
$v_t(S)$ also bounds what a coalition can charge. When no
feasible exposure clears the bound, the remaining levers are the reach of the
leaked object, liability that grows with use (\Cref{sec:beyond}), the value of
early access, and the access structure; under context binding the reach is one
round, so per-use liability is one-time and only the last two remain.

\subsection{Threshold-Encrypted Mempools}
\label{sec:apps-mempool}

In an encrypted mempool the parties are the keypers or validators that hold
key shares, a context is an epoch or a batch, and $v_t(S)$ is what a
front-runner would pay for round $t$'s plaintext order flow before the honest
release point.

\begin{corollary}[Encrypted-mempool instantiation]
\label{cor:mempool}
Consider an encrypted-mempool deployment in which $v_t(S)$ is the maximum
transferable-utility gain available to coalition $S$ from selling premature
access to round $t$'s plaintext order flow. Then $S$ is deterred in a fully
reusable scheme if and only if
${\kappa(S)\geq\sup_{t\leq T}\sum_{r=t}^{T}\delta^{r-t}v_r(S)}$, and in a
context-isolating scheme if and only if ${\kappa(S)\geq\sup_{t\leq T}v_t(S)}$.
\end{corollary}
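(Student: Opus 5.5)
The corollary specializes \Cref{thm:master} to the encrypted-mempool setting; it is essentially \Cref{cor:reusable} and \Cref{cor:context} with the mempool reading of the primitives, so I would derive it by instantiation rather than anew. The first step is to confirm that the deployment fits the model of \Cref{sec:model}. The parties are the keypers and $\A$ is the decryption policy. A context is an epoch or batch. The stage value $v_t(S)$ is the maximal transferable-utility gain from selling premature access to round $t$'s order flow. A successful leak is the public, terminal sale of a decoder. Since the statement defines $v_t(S)$ as exactly this maximal gain, it is a valid stage value: it is nonnegative, transferable within $S$, and summarizes any buyer payment. I would also note that the standing finiteness assumption on $\sup_t\LambdaS_t(S)$ is inherited.

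Next I would apply \Cref{thm:master}: $S$ is deterred if and only if $\kappa(S)\geq\sup_{t\leq T}\LambdaS_t(S)$. For unauthorized $S$, $\Rset_S(t)=\varnothing$, so both displayed conditions reduce to $\kappa(S)\geq 0$, which holds trivially. For authorized $S$ I would compute $\LambdaS_t(S)$ from \Cref{def:lambda} in each design:
\begin{itemize}
\item In a fully reusable scheme, $\Rset_S(t)=\{t,\dots,T\}$, which gives $\LambdaS_t(S)=\sum_{r=t}^{T}\delta^{r-t}v_r(S)$. This is \Cref{cor:reusable}.
\item In a context-isolating scheme, $\Rset_S(t)=\{t\}$, which gives $\LambdaS_t(S)=v_t(S)$. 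This is \Cref{cor:context}.
\end{itemize}
Substituting these into the supremum yields both equivalences.

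The only point needing care is the modeling step, not any computation. I must ensure that ``fully reusable'' and ``context-isolating'' refer to the reachability sets of \Cref{sec:reach}: a long-lived committee key in the first case, and revocable per-context modules or per-epoch keys in the second. The cryptographic guarantee against sub-threshold coalitions alone does not give $\Rset_S(t)=\{t\}$, so I would state that identification explicitly, following the remark on context binding. Given that identification, the corollary is a direct substitution and I expect no real obstacle.
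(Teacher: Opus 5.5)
Your main argument is the paper's proof: substitute $\Rset_S(t)=\{t,\dots,T\}$ or $\Rset_S(t)=\{t\}$ into \Cref{def:lambda} and invoke \Cref{thm:master}, which the paper does in two lines. Your modeling discussion goes beyond that proof but agrees with the paper's remark on context binding.

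One step does not go through as written: your treatment of unauthorized coalitions. $\Rset_S(t)=\varnothing$ does make $\LambdaS_t(S)=0$. But the two conditions displayed in the corollary are written directly in terms of $v_r(S)$ and do not involve $\Rset_S$, so an empty reachability set does not reduce them to $\kappa(S)\geq 0$. The general model places no restriction on $v_t(S)$ for $S\notin\A$. There an unauthorized $S$ is always deterred by \Cref{thm:master}, yet $\kappa(S)\geq\sup_{t}v_t(S)$ can fail, so the literal ``if and only if'' breaks.

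You can fix this in either of two ways. One is to argue that under the mempool reading $v_t(S)=0$ for unauthorized $S$: a coalition that cannot decrypt has no premature access to sell, which matches the paper's convention of setting $v_t(S)=0$ in epochs where $S$ holds no threshold. The other is to read the corollary for $S\in\A$, as the paper's proof implicitly does. Its identities $\LambdaS_t(S)=\sum_{r=t}^{T}\delta^{r-t}v_r(S)$ and $\LambdaS_t(S)=v_t(S)$ hold only for authorized coalitions.
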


\Cref{tab:scheme-mappings} maps keying designs to reachability sets, including
two deployed systems. Ferveo runs its distributed key generation again for each
validator epoch, and it allows one key to serve more than one epoch at a cost in
safety \cite{BebelOjha22}. Shutter's keypers generate a long-lived eon key and
derive a key for each epoch from it \cite{DziembowskiFaustLuhn24}. A coalition
that can produce an epoch key also holds a threshold of eon-key shares, so by
\Cref{prop:modes} its exposure must cover the reach of that larger leak, up to
the next eon key generation. Committee rotation affects reachability only
through key continuity: heavy overlap in membership without key reuse leaves
$\Rset_S(t)=\{t\}$, while a key that stays valid across epochs produces
multi-round reuse even with little overlap. Rotation also changes which
coalitions hold a threshold: with fresh keys the model sets $v_t(S)=0$ in epochs
where $S$ holds none, and keys that outlive a committee need time-indexed access
structures (\Cref{sec:open}).

\begin{table}[t]
\centering
\small
\begin{tabularx}{\linewidth}{@{}>{\raggedright\arraybackslash}p{0.38\linewidth}>{\raggedright\arraybackslash}X>{\raggedright\arraybackslash}p{0.25\linewidth}@{}}
\toprule
Deployment pattern & Leaked object & Reachability $\Rset_S(t)$ \\
\midrule
Long-lived threshold key & decoder for the persistent key & $\{t,t+1,\dots,T\}$ \\
Context-dependent threshold decryption \cite{BonehBunzNayakRotemShoup25}, shares in revocable modules & decryption shares bound to $\chi_t$ & $\{t\}$ \\
Context-dependent threshold decryption, other custody of shares & a threshold of key shares, or access to modules holding them & epochs up to the next key generation \\
Rotating committees, fresh key per epoch & decoder for epoch $t$'s key only & $\{t\}$ \\
Windowed key reuse or lagged refresh & decoder valid for $w$ epochs & $\subseteq\{t,\dots,t+w-1\}$ \\
Ferveo \cite{BebelOjha22}, new key each epoch & decryption capability for epoch $t$ & $\{t\}$ \\
Ferveo, one key for $L$ epochs & capability for the shared key & epochs up to the next key generation, at most $L$ \\
Shutter \cite{DziembowskiFaustLuhn24}, epoch key leaked & secret key of epoch $t$ & $\{t\}$ \\
Shutter, eon key shares leaked & a threshold of eon key shares, which yield every epoch key & epochs up to the next eon key generation \\
\bottomrule
\end{tabularx}
\caption{From keying designs to reachability sets. When a coalition can leak
several objects, the most valuable one governs (\Cref{prop:modes}).
Accountability mechanisms that only penalize, such as traitor tracing
\cite{BonehPartapRotem23} or accountable logging such as L{\O}
\cite{NasrulinIshmaevDecouchantPouwelse23}, change $\kappa(S)$ or $K_t(S)$ and
leave these sets unchanged; revocation or rekeying on detection shrinks them.}
\label{tab:scheme-mappings}
\end{table}

\begin{table}[t]
\centering
\small
\begin{tabular}{@{}lrrr@{}}
\toprule
Per-attack profit $v$ & $\rho=1$ (context isolation) & $\rho=4$ & $\rho=32$ (full reuse) \\
\midrule
Median & \$24.07 & \$96.28 & \$770.24 \\
75th percentile & \$62.92 & \$251.68 & \$2{,}013.44 \\
Maximum & \$20{,}084.01 & \$80{,}336.04 & \$642{,}688.32 \\
\bottomrule
\end{tabular}
\caption{Required exposure $\rho v$ when the per-context value $v$ is set to the
Ethereum insertion-attack profits of Torres et~al.~\cite{TorresCaminoState21}, with
$\delta=1$ over a key lifetime of $32$ block contexts.}
\label{tab:torres}
\end{table}

\begin{example}[Calibration worksheet]
\label{ex:mempool-num}
Suppose a conservative cap on one epoch's leakage value is \$50{,}000 and
$\delta=0.95$. If a leaked decoder stays valid forever, \Cref{cor:reusable}
requires $\kappa(S)\geq 50{,}000/(1-0.95)=1{,}000{,}000$. Context isolation
reduces this to $50{,}000$ by \Cref{cor:context}, and a reuse cap of
$\rho(S)=4$ epochs gives $50{,}000(1+0.95+0.95^2+0.95^3)\approx 185{,}494$ by
\Cref{cor:bounded-reuse}. Tracing that succeeds with probability at least $0.4$,
against a fine of \$500{,}000 after a trace, gives expected exposure $200{,}000$
by \Cref{prop:prob}. That budget deters context-isolated leakage with room to
spare, clears the four-epoch requirement narrowly, and falls five times short of
full reuse.
\end{example}

The dollar inputs of \Cref{ex:mempool-num} are illustrative, and
\Cref{tab:torres} anchors the scale in data. Torres et~al.~\cite{TorresCaminoState21}
identify 196{,}691 insertion (sandwich) attacks on Ethereum, with per-attack
profit of \$24.07 at the median, \$62.92 at the 75th percentile, and
\$20{,}084.01 at the maximum (their Table~3). We take a key reused over $32$
block contexts, set $\delta=1$ over this short window, and use a constant stage
value $v$; a leak that reaches $\rho$ contexts then requires
$\kappa(S)\geq\rho v$. Observed attacker profit is a stress benchmark for
$v_t(S)$ and not a certified upper bound, so a deployment certificate should
aggregate every profitable response to early access and add tail and estimation
margins.

\subsection{Other Applications}
\label{sec:apps-other}

\paragraph{Federated learning under a collective key.}
In POSEIDON \cite{SavPyrgelisTroncosoPastorizaFroelicherBossuatSousaHubaux21},
parties train a neural network under the collective public key of a multiparty
homomorphic encryption scheme \cite{MouchetTroncosoPastorizaBossuatHubaux21},
and decryption requires their cooperation. A training round is a context: a
coalition able to decrypt can sell a buyer the round's model or aggregated
updates, and $v_t(S)$ is the value of that access. If the collective key
persists across rounds, one leak reaches every later round, and refreshing the
key per round or per window of rounds bounds the reachability set as in
\Cref{cor:bounded-reuse}. MedCo
\cite{RaisaroTroncosoPastorizaMisbachSousaPradervandMissiagliaMichielinFordHubaux19}
applies the same collective-key design to queries on distributed clinical and
genomic data, with a query as the context.

\paragraph{Sealed-bid auctions and voting.}
Sealed-bid auction services keep bids hidden among several servers until the
auction closes \cite{FranklinReiter96}. When bids are encrypted to a threshold
key held by the servers, a coalition that leaks the key's decryption capability
opens the bids of every later auction under that key; bids shared afresh for
each auction give $\Rset_S(t)=\{t\}$. In the election scheme of
Cramer et~al.~\cite{CramerGennaroSchoenmakers97}, authorities share a threshold key and
decrypt only the tally; a quorum can instead sell individual ballots to a vote
buyer, and a key reused across elections lets one leak reach later elections.
Ballots are never released, so the leak also opens past elections, which
reachability sets do not count; for data that are never released, the threshold
of \Cref{thm:master} is a lower bound on the exposure needed.

\paragraph{Threshold signatures.}
The first-leak argument also covers threshold signatures, where the object sold
is the ability to act. Then $v_t(S)$ is the total gain from unauthorized
signing in context $t$, and $\Rset_S(t)$ records the contexts in which a leaked
signing capability stays valid: every later epoch for a long-lived bridge key,
even when each signature is domain separated, and a single epoch only for
independent per-epoch keys or for keys kept in modules that sign only for the
current epoch and are replaced as soon as a leak becomes public.
\Cref{thm:master} applies when forged actions have fixed reachable values; a
forged action that changes balances, nonces, or the set of feasible later
actions makes values state dependent, which the model does not cover.

\section{Limitations and Scope}
\label{sec:limits}

The inputs $v_t(S)$, $\Rset_S(t)$, and $\kappa(S)$ are exogenous, and
\Cref{sec:apps-map} shows how to bound them.

\Cref{thm:master} is exact under transferable utility. When the first-leak
benefit and the exposure are deterministic, any increasing coalition utility
ranks a deviation by the sign of $\LambdaS_t(S)-\kappa(S)$, so risk aversion
leaves the threshold unchanged. Risk attitudes matter once exposure or value is
random: the expected-exposure reductions of \Cref{prop:prob,prop:time-dist} are
exact under risk neutrality and with detection that does not respond to later
strategic behavior. Without side transfers a coalition has no scalar objective,
and the threshold is a transferable-utility benchmark for it.

The benchmark is a leak by one coalition while every other party executes
honestly. When authorized coalitions overlap, one coalition's leak can preempt
another's; \Cref{rem:race} gives the comparison under exogenous survival
probabilities, and the full race is open. The model also takes a leak to be
terminal (\Cref{sec:reach}); when tracing fails to identify the leakers this
rests on that assumption alone, since unidentified leakers would receive any
replacement keys or modules. A coalition that could keep leaking after a single charge would
collect every later round under any keying design, so the fully reusable
threshold would apply even under context binding.

Deterring early leakage does not make execution fair after decryption.
\Cref{prop:ordering} gives a context-isolating deployment in which no coalition
leaks and a block builder still sandwiches a user after honest decryption, so
ordering after release needs its own mechanism or consensus constraints, and a
theory that combines the two layers is open.

\section{Open Problems}
\label{sec:open}

\begin{enumerate}[leftmargin=2em]
\item \textbf{Endogenous leakage markets.} Derive the stage value $v_t(S)$ from
strategic buyers, buyer competition, limited commitment, and competing
coalitions.

\item \textbf{Dynamic and rotating committees.} Extend the reachability
characterization to time-indexed access structures $\A_t$, time-indexed
coalitions $S_t$, and partially overlapping committees.

\item \textbf{Partial context isolation.} \Cref{cor:bounded-reuse} prices reuse
multiplicity alone; exact thresholds for leaked objects that work on a
structured subset of future contexts remain open.

\item \textbf{Beyond transferable utility.} Which thresholds survive imperfect
side transfers, endogenous tracing investment, or risk aversion under random
exposure?
\end{enumerate}

\clearpage
\appendix

\section{Proofs for the Main Results}
\label{sec:proofs}

This appendix proves the results of \Cref{sec:main} in the order in which
they are stated.

\subsection{Normal Form for Coalition Deviations}

The first lemma reduces every coalition deviation to its first successful leak,
which turns a repeated deviation into a supremum over first-leak times.

\begin{lemma}[Earliest-leak normal form]
\label{lem:normal}
Fix a coalition $S \subseteq \Players$, a round $t_0$, and a public history with
no successful leak before $t_0$. Consider a pure deviation strategy by $S$ from
$t_0$ on, and let $\tau\geq t_0$ be the round of its successful leak, with
$\tau=\infty$ if it never leaks. In present value at round $t_0$ and relative
to remaining honest forever, the deviation yields
\[
\begin{cases}
\delta^{\tau-t_0}\bigl(\LambdaS_\tau(S)-\kappa(S)\bigr), & \text{if } \tau < \infty,\\
0, & \text{if } \tau = \infty.
\end{cases}
\]
\end{lemma}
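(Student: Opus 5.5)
I will prove the lemma by an accounting argument that compares, round by round, the stream of transferable-utility payoffs that $S$ receives under the deviation with the stream it receives under honest play. The baseline is fixed first. In the model, honest execution gives the coalition no leakage value and no exposure, and actions other than leaking carry no payoff. So the honest continuation contributes zero relative payoff in every round, and only rounds in which the deviation's payoffs differ from zero need to be tracked. I would state this explicitly as the normalization, because it means ``relative to remaining honest forever'' is just the deviation's own payoff stream.

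\textbf{Case $\tau=\infty$.} The strategy never produces a successful leak. Every non-leak action carries no payoff, and no buyer gains early access to any round through $S$. Stage values accrue only from unauthorized early access, and that access is supplied by a successful leak, so $S$ collects nothing and bears no exposure. The relative value is $0$. This also covers unauthorized $S$, since $\Rset_S(\cdot)=\varnothing$ means nothing is reached even when an ``attempt'' occurs.

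\textbf{Case $\tau<\infty$.} Before $\tau$, every round again contributes $0$, by the same argument. At $\tau$, the leak is successful and public, and by the reachability definition the buyer gains access to exactly the rounds in $\Rset_S(\tau)$. Round $r\in\Rset_S(\tau)$ then contributes $v_r(S)$, which is discounted to round $\tau$ as $\delta^{r-\tau}v_r(S)$. The exposure $\kappa(S)$ is charged once and is measured in present value at $\tau$.

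The step that needs care is showing that nothing after $\tau$ adds value beyond $\Rset_S(\tau)$. Here I invoke terminality. The leaking parties take no further part, so $S$ cannot leak again, and no round outside $\Rset_S(\tau)$ becomes accessible through $S$. Because $v_r(S)$ counts access ``whether a leak by $S$ at round $t$ or an earlier one provides that access'', there is also no double counting. Each reached round is credited once.

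Summing these contributions, the value at $\tau$ is $\LambdaS_\tau(S)-\kappa(S)$. The sum is finite by the standing assumption of \Cref{def:lambda}, and it is well defined because all terms are nonnegative. Discounting back from $\tau$ to $t_0$ multiplies by $\delta^{\tau-t_0}$, which gives the formula.

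\textbf{Main obstacle.} I expect the difficulty to lie not in any calculation but in making the reduction airtight. The strategy may condition on the history, but it is pure and the other parties are honest, so $\tau$ is a deterministic round. The accounting therefore involves no expectation. The remaining task is to argue, from the modeling assumptions alone, that pre-leak deviations such as partial share exchange are payoff-irrelevant. I would cite the sentence ``actions other than leaking carry no payoff'' for this.
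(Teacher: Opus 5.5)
Your proposal is correct and follows essentially the same route as the paper's proof. Both split into $\tau=\infty$, which gives no payoff under the model's convention, and $\tau<\infty$, where terminality limits the buyer's access to exactly $\Rset_S(\tau)$, $\kappa(S)$ is charged once at $\tau$, and discounting to $t_0$ gives $\delta^{\tau-t_0}(\LambdaS_\tau(S)-\kappa(S))$; your extra remarks, that $\tau$ is deterministic for a pure strategy against honest parties and that the sum is finite by the standing assumption, are left implicit in the paper and do not change the argument.
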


\begin{proof}
We treat the two cases in turn. If $\tau=\infty$, the coalition never leaks,
so under the payoff convention of \Cref{sec:model} it gains and pays nothing.

Suppose $\tau<\infty$. No round before $\tau$ is compromised, because $\tau$ is
the only leak. The leak is terminal, so the buyer reads exactly the rounds in
$\Rset_S(\tau)$, each worth its stage value, and the exposure $\kappa(S)$ is
charged once, at $\tau$. Discounting every amount to $t_0$ gives
\[
\sum_{r\in\Rset_S(\tau)}\delta^{r-t_0}v_r(S)-\delta^{\tau-t_0}\kappa(S)
=\delta^{\tau-t_0}\bigl(\LambdaS_\tau(S)-\kappa(S)\bigr).
\]
\end{proof}

\subsection{Exact Characterization}

\begin{proof}[Proof of \Cref{thm:master}]
We fix a coalition $S$ and prove sufficiency first. Suppose that
$\kappa(S)\geq\sup_{t\leq T}\LambdaS_t(S)$, and take a round $t_0$, a public
history with no leak before $t_0$, and a pure deviation strategy. By
\Cref{lem:normal} the deviation yields $0$ or
$\delta^{\tau-t_0}(\LambdaS_\tau(S)-\kappa(S))$, and the assumed inequality
makes the latter nonpositive. A randomized deviation averages these values, so
$S$ is deterred.

For necessity, suppose that $\kappa(S)<\sup_{t\leq T}\LambdaS_t(S)$. By the
definition of the supremum we can choose a round $t^\star$ with
$\LambdaS_{t^\star}(S)>\kappa(S)$, whether or not the supremum is attained. We
let $S$ stay honest until round $t^\star$ and leak there. Evaluated at round
$1$, this deviation yields, by \Cref{lem:normal},
\[
\delta^{t^\star-1}\bigl(\LambdaS_{t^\star}(S)-\kappa(S)\bigr)>0,
\]
so $S$ is not deterred. For an unauthorized coalition every reachability set is
empty, so $\sup_{t}\LambdaS_t(S)=0\leq\kappa(S)$.
\end{proof}

\begin{proof}[Proof of \Cref{cor:oneshot}]
With $T=1$ we have $\Rset_S(1)=\{1\}$ for every authorized $S$, because
$\Rset_S(1)\subseteq\{1\}$ contains round $1$. So $\LambdaS_1(S)=v_1(S)$, and
\Cref{thm:master} gives the claim.
\end{proof}

\begin{proof}[Proof of \Cref{cor:reusable}]
We substitute $\Rset_S(t)=\{t,\dots,T\}$ into \Cref{def:lambda}, which gives
$\LambdaS_t(S)=\sum_{r=t}^{T}\delta^{r-t}v_r(S)$ for every authorized $S$, and
\Cref{thm:master} gives the first statement. If in addition $T=\infty$,
$\delta<1$, and $v_r(S)=v(S)$ for all $r$, the geometric series gives, for every
$t$,
\[
\LambdaS_t(S)=\sum_{j=0}^{\infty}\delta^j v(S)=\frac{v(S)}{1-\delta}.
\]
\end{proof}

\begin{proof}[Proof of \Cref{cor:context}]
Under context isolation we have $\LambdaS_t(S)=v_t(S)$ for every authorized $S$
and round $t$. \Cref{thm:master} gives the claim.
\end{proof}

\subsection{Separation and Converse Lower Bound}

\begin{proof}[Proof of \Cref{thm:separation}]
Under fully reusable leakage with $\delta=1$ and unit values,
\[
\LambdaS_t(S^\star)=\sum_{r=t}^{T} 1 = T-t+1,
\]
which is largest at $t=1$, so $\sup_{t\leq T}\LambdaS_t(S^\star)=T$ and
\Cref{thm:master} gives (i). Under context isolation
$\LambdaS_t(S^\star)=v_t(S^\star)=1$ for every $t$, and \Cref{thm:master} gives
(ii). For the last claim we fix a $T$-round instance and a coalition $S$. Since
$\delta^{r-t}\leq 1$ and $v_r(S)\leq\sup_{t'\leq T}v_{t'}(S)$, every round
$t\leq T$ satisfies
\[
\sum_{r=t}^{T}\delta^{r-t}v_r(S)\leq(T-t+1)\sup_{t'\leq T}v_{t'}(S)
\leq T\sup_{t'\leq T}v_{t'}(S).
\]
\end{proof}

\begin{proof}[Proof of \Cref{prop:reuse-mult}]
Since $\delta=1$ and every stage value of $S$ is $1$, each reachable round
contributes exactly $1$, and we get
\[
\LambdaS_t(S)=\sum_{r\in\Rset_S(t)}1=|\Rset_S(t)|
\qquad\text{for every round } t.
\]
So $\sup_{t\leq T}\LambdaS_t(S)=\rho(S)$, and \Cref{thm:master} makes $\rho(S)$
the threshold for $S$. Along a family in which $\rho(S)$ grows without bound,
this threshold exceeds every constant.
\end{proof}

\begin{proof}[Proof of \Cref{cor:bounded-reuse}]
We fix a round $t$. Since $v_r(S)\leq \bar v(S)$ for every round $r$ and the
discount weights $\delta^{r-t}$ weakly decrease in the delay $r-t$, a discounted
sum over $|\Rset_S(t)|$ rounds from $t$ on is largest when those rounds come
first. Because $|\Rset_S(t)|\leq\rho(S)$, this gives
\[
\LambdaS_t(S)
\leq
\bar v(S)\sum_{j=0}^{|\Rset_S(t)|-1}\delta^j
\leq
\bar v(S)\sum_{j=0}^{\rho(S)-1}\delta^j,
\]
and \Cref{thm:master} shows that $S$ is deterred. For tightness, suppose that
$\Rset_S(t)=\{t,t+1,\dots,t+\rho(S)-1\}$ for some round $t$ and that each of
these rounds has stage value $\bar v(S)$. Both inequalities are then equalities
at that $t$, so \Cref{thm:master} rules out any smaller exposure.
\end{proof}

\subsection{Access-Structure Corollaries}

\begin{proof}[Proof of \Cref{cor:uniform}]
Since $1\leq k\leq n$, the family $\A$ is finite and nonempty and every
$S\in\A$ has $|S|\geq 1$, so we may divide by $|S|$. By \Cref{thm:master},
honest execution is then coalition-safe exactly when
\[
d \geq \frac{\sup_{t \leq T}\LambdaS_t(S)}{|S|}
\qquad\text{for every } S \in \A,
\]
and the smallest such $d$ is the maximum of the right-hand side over $S\in\A$,
which is $d^\star$; it is finite by the standing assumption of
\Cref{def:lambda}. When the numerator depends only on $|S|$, coalitions of equal
size impose the same constraint, and the maximum runs over sizes:
\[
d^\star=\max_{k\leq m\leq n}\frac{\lambda_m}{m}.
\]
\end{proof}

\begin{proof}[Proof of \Cref{cor:hetero-lp}]
We apply \Cref{thm:master} to the additive exposure
$\kappa(S)=\sum_{i\in S}d_i$: the deterring profiles are exactly the nonnegative
solutions of the covering constraints
\[
\sum_{i \in S} d_i \geq \sup_{t \leq T}\LambdaS_t(S)
\qquad\text{for every } S \in \A,
\]
whose right-hand sides are finite by the standing assumption of
\Cref{def:lambda}. Minimizing $\sum_i d_i$ over them is a linear program.
\end{proof}

\section{Extensions, Examples, and Boundary Results}
\label{sec:boundary}

\subsection{Proofs of the Exposure Extensions}
\label{sec:boundary-prob}

\begin{proof}[Proof of \Cref{prop:prob}]
We compare with the baseline model in which $S$ bears the deterministic
exposure $\kappa'(S)=p(S)f(S)$. The detection probability and the fine are
fixed in advance, so a risk-neutral coalition values a deviation by its
expected payoff, and the proof of \Cref{lem:normal} values a pure deviation
whose leak comes at round $\tau$ at
$\delta^{\tau-t_0}(\LambdaS_\tau(S)-p(S)f(S))$ in expectation at round $t_0$,
its value in that baseline. \Cref{thm:master} for $\kappa'$ gives the claim.
\end{proof}

\begin{proof}[Proof of \Cref{prop:time-dist}]
We fix a coalition $S$, a round $t_0$, and a public history with no leak before
$t_0$. The proof of \Cref{lem:normal}, with the expected charge $K_\tau(S)$ in
place of $\kappa(S)$, values a pure deviation whose leak comes at round $\tau$
at $\delta^{\tau-t_0}(\LambdaS_\tau(S)-K_\tau(S))$ and one that never leaks at
$0$. If $K_t(S)\geq\LambdaS_t(S)$ for every $t$, these values and their
averages are nonpositive, so $S$ is deterred. If $K_t(S)<\LambdaS_t(S)$ for
some $t$, staying honest until $t$ and leaking there gains
$\delta^{t-1}(\LambdaS_t(S)-K_t(S))>0$ from round $1$. Unauthorized coalitions
have $K_t(S)=\LambdaS_t(S)=0$.
\end{proof}

\subsection{Geometric Evidence Accumulation}
\label{sec:boundary-dynamic}

Under a constant per-round tracing hazard, the time-distributed exposure of
\Cref{prop:time-dist} has a closed form.

\begin{corollary}[Geometric evidence accumulation]
\label{cor:geom-trace}
Fix an authorized coalition $S$ and let $T=\infty$. Suppose a leak at round $t$
compromises exactly the next $\rho(S)$ rounds, so
$\Rset_S(t)=\{t,t+1,\dots,t+\rho(S)-1\}$, where $\rho(S)$ may be finite or
$\infty$. Suppose further that each compromised round independently produces
fresh tracing evidence with hazard $q(S)\in[0,1]$, and that the coalition pays
the fine $f(S)$ upon the first traced round. Then
\[
K_t(S)
=
\sum_{j=0}^{\rho(S)-1}f(S)\,q(S)\bigl(\delta(1-q(S))\bigr)^j.
\]
If $\rho(S)<\infty$ and $\delta(1-q(S))\neq 1$, this equals
\[
K_t(S)
=
f(S)\,q(S)\,
\frac{1-\bigl(\delta(1-q(S))\bigr)^{\rho(S)}}{1-\delta(1-q(S))},
\]
and if $\rho(S)=\infty$ and $\delta(1-q(S))<1$, then
\[
K_t(S)
=
\frac{f(S)\,q(S)}{1-\delta(1-q(S))}.
\]
Hence $S$ is deterred if and only if this quantity dominates $\LambdaS_t(S)$
for every round $t$.
\end{corollary}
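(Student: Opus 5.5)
The plan is to compute the expected discounted fine from a leak at round $t$ directly and then apply \Cref{prop:time-dist}. I write $q=q(S)$, $f=f(S)$ and $\rho=\rho(S)$, and I measure everything in present value at the leak round $t$, the same convention as $\LambdaS_t(S)$.

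The first step identifies the fine's timing. The reachability set is $\{t,\dots,t+\rho-1\}$, so the compromised rounds are $t+j$ for $0\le j<\rho$. Each of these rounds produces evidence independently with probability $q$. The first trace therefore occurs at round $t+j$ with probability $(1-q)^j q$. In that event the fine $f$ is paid at $t+j$, which is worth $\delta^j f$ at round $t$. If no compromised round is traced, nothing is paid; this has probability $(1-q)^\rho$, or $0$ when $\rho=\infty$ and $q>0$.

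The second step sums these contributions. The expected discounted charge is
\[
K_t(S)=\sum_{j=0}^{\rho-1}\delta^j(1-q)^j q f=\sum_{j=0}^{\rho-1} f\,q\,\bigl(\delta(1-q)\bigr)^j .
\]
This has the form required by \Cref{prop:time-dist}, with nonnegative charges $e_{t,t+j}(S)=fq(1-q)^j$. The closed forms are then ordinary geometric sums. When $\rho<\infty$ and the ratio $\delta(1-q)\neq1$, the finite geometric formula gives the first display. When $\rho=\infty$ and the ratio is below $1$, the infinite series converges to the second display. For $\rho=\infty$, linearity of expectation across countably many disjoint events is justified by monotone convergence, since every term is nonnegative.

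The third step is the deterrence claim, which follows directly from \Cref{prop:time-dist}. That proposition already allows the charge to depend on $t$, although here $K_t(S)$ turns out to be independent of $t$. Risk neutrality and a detection law fixed in advance, which are the standing assumptions of \Cref{sec:beyond}, justify using the expectation.

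The main obstacle is a modeling subtlety rather than the algebra. The argument must treat the fine as paid only once, at the first trace, and the leak as terminal, so that no strategic behaviour after the leak changes the hazard. With those points fixed, the first-trace distribution is geometric truncated at $\rho$, and the rest of the proof is routine.
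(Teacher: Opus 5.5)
Your proposal is correct and matches the paper's proof. Like the paper, you take $e_{t,t+j}(S)=f(S)\,q(S)(1-q(S))^j$, the fine times the probability that the first trace falls $j$ rounds after the leak, then sum the finite or infinite geometric series and invoke \Cref{prop:time-dist}. Your extra remarks on the untraced event and on monotone convergence when $\rho(S)=\infty$ only make explicit what the paper's proof leaves implicit.
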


\begin{proof}
We apply \Cref{prop:time-dist} with $e_{t,t+j}(S)=f(S)\,q(S)(1-q(S))^j$ for
each reachable offset $j$: the fine $f(S)$ times the probability that the first
trace comes $j$ rounds after the leak. The first formula is the resulting sum.
Its finite and infinite geometric sums give the closed forms.
\end{proof}

\subsection{Stochastic Reuse and Partial Leakage}

The same first-leak reduction continues to work if reuse or leakage quality is
random, provided the coalition evaluates deviations in expected money terms.

\begin{proposition}[Stochastic reachability]
\label{prop:random-reach}
Suppose that if coalition $S$ leaks at round $t$, then the compromised rounds
form a random set $\widetilde{\Rset}_S(t) \subseteq \{t,\dots,T\}$ whose law
does not depend on the public history before round $t$. Define the expected
reachable leakage value
\[
\overline{\Lambda}_t(S)
=
\mathbb{E}\!\left[
\sum_{r \in \widetilde{\Rset}_S(t)} \delta^{r-t} v_r(S)
\right],
\]
and assume that it is finite. If the coalition is risk neutral and the expected
discounted exposure from a leak at round $t$ is $K_t(S)$, then honest execution
is coalition-safe if and only if
\[
K_t(S)\geq\overline{\Lambda}_t(S)
\qquad\text{for every } S \in \A \text{ and every } t \leq T.
\]
\end{proposition}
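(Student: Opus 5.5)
The plan is to rerun the argument of \Cref{lem:normal} and \Cref{thm:master} with the random reachability set averaged out, in the same way the proof of \Cref{prop:time-dist} did for a time-distributed charge. I fix a coalition $S$, a round $t_0$, a public history with no leak before $t_0$, and a pure deviation strategy whose first (and, by terminality, only) successful leak comes at round $\tau\geq t_0$. If $\tau=\infty$, the deviation is worth $0$, exactly as in \Cref{lem:normal}. If $\tau<\infty$, then conditional on the realized set $\widetilde{\Rset}_S(\tau)$ the buyer reads exactly those rounds. The realized present value at $t_0$ is therefore $\delta^{\tau-t_0}\bigl(\sum_{r\in\widetilde{\Rset}_S(\tau)}\delta^{r-\tau}v_r(S)\bigr)$ minus the discounted exposure.

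The key step is taking expectations. Because the law of $\widetilde{\Rset}_S(\tau)$ does not depend on the public history before $\tau$, the conditional expectation of the reachable sum, given the history and the decision to leak at $\tau$, equals $\overline{\Lambda}_\tau(S)$. The stage values are nonnegative, so Tonelli justifies interchanging sum and expectation, and finiteness is assumed. The expected exposure is $K_\tau(S)$ by hypothesis. A risk-neutral coalition therefore values the pure deviation at $\delta^{\tau-t_0}(\overline{\Lambda}_\tau(S)-K_\tau(S))$, or at $0$ if it never leaks. This is the analogue of \Cref{lem:normal}.

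I expect this step to be the main obstacle. The leak time $\tau$ of a pure strategy is a deterministic function of the public history, which other parties behave honestly to generate. The only randomness after the leak is the reachability draw, and since the leak is terminal, no later decision of $S$ can condition on that draw. I would state this explicitly, so that the expected value factors as claimed and no strategy can select favorable realizations of $\widetilde{\Rset}_S$. Randomized deviations then average the values of pure ones.

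The theorem then follows as in the proof of \Cref{prop:time-dist}. For sufficiency, if $K_t(S)\geq\overline{\Lambda}_t(S)$ for every $t$, every pure deviation has nonpositive value, and so does every mixture; hence $S$ is deterred. For necessity, if $K_{t^\star}(S)<\overline{\Lambda}_{t^\star}(S)$ for some authorized $S$ and round $t^\star$, then staying honest until $t^\star$ and leaking there yields $\delta^{t^\star-1}(\overline{\Lambda}_{t^\star}(S)-K_{t^\star}(S))>0$ evaluated at round $1$. Unauthorized coalitions have empty reachability, so $\overline{\Lambda}_t(S)=0\leq K_t(S)$, and coalition safety reduces to the stated condition over $S\in\A$.
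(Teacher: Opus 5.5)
Your proposal is correct and follows the paper's own proof: it revalues a pure deviation leaking at $\tau$ as $\delta^{\tau-t_0}\bigl(\overline{\Lambda}_\tau(S)-K_\tau(S)\bigr)$, using that the law of $\widetilde{\Rset}_S(\tau)$ does not depend on the history, and then repeats the sufficiency and necessity argument of \Cref{prop:time-dist}. Your explicit use of Tonelli and your observation that terminality keeps the coalition from conditioning on the draw are details the paper leaves implicit.
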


\begin{proof}
Since the law of $\widetilde{\Rset}_S(t)$ does not depend on the public history,
the expected benefit of a leak at round $t$ is $\overline{\Lambda}_t(S)$ after
every history with no earlier leak. We therefore repeat the proof of
\Cref{prop:time-dist} with $\overline{\Lambda}_t(S)$ in place of
$\LambdaS_t(S)$: a pure deviation whose leak comes at round $\tau$ is worth in
expectation, at round $t_0$,
\[
\delta^{\tau-t_0}\bigl(\overline{\Lambda}_\tau(S)-K_\tau(S)\bigr).
\]
\end{proof}

\begin{remark}
\Cref{prop:random-reach} covers deployments in which reuse ends at a random
time drawn after the leak, such as revocation triggered by stochastic evidence
or a memoryless refresh schedule. A refresh schedule that the public history
reveals, such as a fixed period with a public phase, belongs to the
deterministic model, with $\Rset_S(t)$ read off the schedule.
\end{remark}

\begin{proposition}[Leakage modes and degraded decoders]
\label{prop:modes}
For each coalition $S$, let $\mathcal{L}_S$ be a set of possible leak modes,
empty when $S\notin\A$. If mode $\ell \in \mathcal{L}_S$ is chosen at round $t$,
it generates stage values $v_r^{\ell}(S)$ over a reachable set
$\Rset_S^{\ell}(t) \subseteq \{t,\dots,T\}$. Define
\[
\LambdaS_t^{\ell}(S)
=
\sum_{r \in \Rset_S^{\ell}(t)} \delta^{r-t} v_r^{\ell}(S),
\]
with a finite supremum over $\ell$ and $t$ for every $S\in\A$. Then an
authorized coalition $S$ is deterred if and only if
\[
\kappa(S)\geq\sup_{\ell \in \mathcal{L}_S,\ t \leq T}\LambdaS_t^{\ell}(S),
\]
and honest execution is coalition-safe if and only if this holds for every
$S\in\A$.
\end{proposition}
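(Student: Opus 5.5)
The plan is to reduce \Cref{prop:modes} to \Cref{thm:master} by redoing the first-leak argument with the leak's mode as an extra choice variable. I will fix an authorized coalition $S$, a round $t_0$, and a public history with no successful leak before $t_0$. A pure deviation strategy now determines both a leak round $\tau\geq t_0$ (or $\tau=\infty$) and a mode $\ell\in\mathcal{L}_S$ used at $\tau$.

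The first step is the analogue of \Cref{lem:normal}. The leak is terminal, so $S$ leaks at most once and in exactly one mode. The buyer then reads exactly the rounds in $\Rset_S^{\ell}(\tau)$, with values $v_r^{\ell}(S)$, and $\kappa(S)$ is charged once at $\tau$. Discounting to $t_0$ gives a deviation worth $\delta^{\tau-t_0}\bigl(\LambdaS_\tau^{\ell}(S)-\kappa(S)\bigr)$ if $\tau<\infty$, and $0$ if $\tau=\infty$. The computation is the same as in the proof of \Cref{lem:normal}, with $(\Rset_S,v)$ replaced by $(\Rset_S^{\ell},v^{\ell})$.

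For sufficiency, assume $\kappa(S)\geq\sup_{\ell,t}\LambdaS_t^{\ell}(S)$. Then every pure deviation has nonpositive value, and a randomized deviation (possibly mixing over modes and times) averages these values, so $S$ is deterred. For necessity, assume the inequality fails. By the definition of the supremum, which is finite by assumption, there is a pair $(\ell^\star,t^\star)$ with $\LambdaS_{t^\star}^{\ell^\star}(S)>\kappa(S)$, whether or not the supremum is attained. The deviation ``stay honest until $t^\star$, then leak in mode $\ell^\star$'' is worth $\delta^{t^\star-1}\bigl(\LambdaS_{t^\star}^{\ell^\star}(S)-\kappa(S)\bigr)>0$ evaluated at round $1$, so $S$ is not deterred.

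For the coalition-safety statement, an unauthorized $S$ has $\mathcal{L}_S=\varnothing$. It therefore has no leaking deviation, and by the convention of \Cref{sec:model} non-leaking deviations are worth $0$, so every unauthorized coalition is deterred. Coalition safety is then equivalent to the condition holding for every $S\in\A$.

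The main obstacle is conceptual rather than computational: justifying that one leak is exactly one mode, with the value computed from that mode's set. A coalition might instead hand over several objects at once, but I would treat such a combined release as a single mode in $\mathcal{L}_S$. Terminality then forbids any second leak, so no cross-mode additivity enters the argument.
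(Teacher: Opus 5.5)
Your proposal is correct and matches the paper's proof: the paper also treats a leak as a pair of a round $\tau$ and a mode $\ell$, and it reruns the computation of \Cref{lem:normal} to value the deviation at $\delta^{\tau-t_0}(\LambdaS^{\ell}_\tau(S)-\kappa(S))$. It then notes that an unauthorized coalition has no mode and repeats the proof of \Cref{thm:master} with the supremum taken over both choices. One minor point: your necessity step does not need the supremum to be finite, since a pair with $\LambdaS^{\ell^\star}_{t^\star}(S)>\kappa(S)$ exists whenever $\kappa(S)$ falls below the supremum, finite or not.
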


\begin{proof}
A leak now consists of a round $\tau$ and a mode $\ell$, and the proof of
\Cref{lem:normal} values a pure deviation that leaks in mode $\ell$ at round
$\tau$ at $\delta^{\tau-t_0}(\LambdaS^{\ell}_\tau(S)-\kappa(S))$, evaluated at
round $t_0$. An unauthorized coalition has no mode and never leaks. We then
repeat the proof of \Cref{thm:master} with the supremum taken over both
choices.
\end{proof}

\begin{remark}
A degraded decoder that is useful only for a subset of transactions is simply
one leakage mode $\ell$: its lower coverage appears as smaller stage values
$v_r^{\ell}(S)$, and its weaker portability appears as a smaller reachable set
$\Rset_S^{\ell}(t)$. The proposition needs no monotonicity or subadditivity
assumptions on those objects.
\end{remark}

\subsection{An Explicit \texorpdfstring{$k$-out-of-$n$}{k-out-of-n} Example}

\begin{example}[Stationary reusable versus context-bound committees]
\label{ex:ton}
Let $\A=\{S \subseteq \Players: |S|\geq k\}$ for some $1\leq k\leq n$, let
$T=\infty$ and $\delta \in (0,1)$, and assume that every authorized coalition of
size $m$ has constant per-round leakage value $v(m)$. In a fully reusable
deployment,
\[
\sup_t \LambdaS_t(S)=\frac{v(|S|)}{1-\delta},
\]
so by \Cref{cor:uniform} the exact equal-exposure requirement is
\[
d^\star_{\mathrm{reuse}}
=
\max_{k\leq m\leq n}\frac{v(m)}{m(1-\delta)}.
\]
In a context-isolating deployment, the factor $1/(1-\delta)$ disappears:
\[
d^\star_{\mathrm{ctx}}
=
\max_{k\leq m\leq n}\frac{v(m)}{m}.
\]
Thus context isolation removes the entire continuation-value multiplier.
\end{example}

\subsection{Proof of the Encrypted-Mempool Corollary}

\begin{proof}[Proof of \Cref{cor:mempool}]
Under full reuse $\LambdaS_t(S)=\sum_{r=t}^{T}\delta^{r-t}v_r(S)$, and under
context isolation $\LambdaS_t(S)=v_t(S)$. We apply \Cref{thm:master} in each
case.
\end{proof}

\subsection{What the Model Does Not Solve}
\label{sec:boundary-ordering}

The paper's guarantee concerns leakage before the honest release point. It does
not imply fairness after plaintexts become public.

\begin{proposition}[Early-leakage deterrence does not imply order fairness]
\label{prop:ordering}
There exists a deployment in which honest execution is coalition-safe under
the exact threshold of \Cref{thm:master}, yet a block builder can still extract
positive value after honest decryption by placing its own transactions around a
user's.
\end{proposition}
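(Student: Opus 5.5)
The statement of \Cref{prop:ordering} is existential, so the plan is to exhibit one small deployment and check two things about it: it meets the exact threshold of \Cref{thm:master}, and a block builder can still profit after honest release. I would take $n=1$ (or any $k$-out-of-$n$ committee) with $T=1$ and a context-isolating deployment, so that by \Cref{cor:oneshot} coalition safety reduces to $\kappa(S)\geq v_1(S)$ for every $S\in\A$. I then fix the per-party exposure large enough, for instance $d\geq v_1(\Players)$ with $\kappa(S)=d|S|$, so that every authorized coalition is deterred by \Cref{cor:uniform}. This half is immediate from results already proved.

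The second half needs a concrete post-release ordering game. I would model it outside the reachability model, as the paper's definition of deterrence explicitly allows. There is a constant-product AMM pool with reserves $(x,y)$. A user transaction swaps $a$ units of $X$ with a slippage bound that is loose enough to tolerate a small adverse price move. After honest decryption the builder, who is not a committee member or whose actions are not leaks, learns the plaintext and orders the batch as front-run, user, back-run. I would write the builder's profit as a function of the front-run size $b$: it buys $b$ of $X$-for-$Y$ before the user and sells after. The step that needs care is to show that this profit is strictly positive for small $b>0$ while the user's slippage constraint still holds. To do this I would compute the derivative of the profit at $b=0$ and show it equals a positive multiple of the user's price impact, which is positive whenever $a>0$. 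Alternatively, I would plug in explicit numbers, such as $x=y=100$, $a=10$, $b=1$, a zero fee, and a 5\% slippage tolerance, and verify the arithmetic directly.

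The main obstacle is not mathematical depth but making sure the example is consistent with the model. The builder's extraction must not count as a leak: it happens after the honest release point, so it contributes nothing to $v_1(S)$ or $\Rset_S$. The committee's exposure choice must also leave the builder's incentive untouched. I would state that the builder's gain does not enter any $\LambdaS_t(S)$, so both conclusions hold simultaneously. The numeric verification of positive sandwich profit within the slippage bound would be the final step.
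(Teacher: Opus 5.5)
Your proposal is correct and follows the paper's proof: a context-isolating deployment whose exposure meets \Cref{cor:context}, plus an explicit zero-fee sandwich against a user selling $10$ units of $X$ into a constant-product pool with reserves $(100,100)$. The paper front-runs with $5$ units of $X$ against a minimum output of $8$; your $b=1$ with a $5\%$ tolerance also works, provided the tolerance is measured from the quoted output $100/11$ and the front-run sells $X$ in the user's direction (the user then receives $10^5/11211\approx 8.92$ and the builder nets $2110/10211\approx 0.21$ units of $X$), and your one slip is that $d\geq v_1(\Players)$ need not deter every coalition of a general $k$-out-of-$n$ committee because stage values need not be monotone in $S$, so take $d\geq\max_{S\in\A}v_1(S)$ or keep $n=1$.
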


\begin{proof}
We take a context-isolating deployment whose exposure satisfies
\Cref{cor:context}, so that every coalition is deterred, and we let one round
contain a user order that sells $10$ units of token $X$ to a constant-product
pool with reserves $(100,100)$ and no fee, with a minimum output of $8$ units of
token $Y$. Every trade preserves the product $10^4$ of the reserves. After
honest decryption the builder first sells $5$ units of $X$ and receives $100/21$
units of $Y$. The user's order still clears, since it now receives $4000/483>8$
units of $Y$, and the builder then sells its $100/21$ units of $Y$ back for
$2645/443$ units of $X$, a profit of
\[
\frac{2645}{443}-5=\frac{430}{443}>0.
\]
\end{proof}

\begin{remark}[Race-to-leak interactions]
\label{rem:race}
\Cref{thm:master} is a baseline first-leak deterrence statement. A lightweight
extension gives coalition $S$ exogenous survival probabilities
$\sigma_{t,r}(S)\in[0,1]$, where $\sigma_{t,r}(S)$ is the probability that
$S$'s buyer still enjoys exclusive early access in round $r$ if $S$ leaks at
round $t$. Suppose that $S$ is risk neutral, that access without exclusivity is
worth nothing, and that the random set of rounds at which exclusivity survives
has a law that does not depend on the public history. Taking that set as the
random reachable set, \Cref{prop:random-reach} with $K_t(S)=\kappa(S)$ gives the
comparison $\kappa(S)\geq\sup_t\widetilde{\Lambda}_t(S)$, where
\[
\widetilde{\Lambda}_t(S)
=
\sum_{r \in \Rset_S(t)} \delta^{r-t} \sigma_{t,r}(S) v_r(S).
\]
The hard part is endogenizing the survival probabilities when authorized
coalitions overlap and strategically preempt one another. That full race-to-leak equilibrium problem remains
outside the present paper.
\end{remark}

\newpage
\section*{NeurIPS Paper Checklist}

\begin{enumerate}

\item {\bf Claims}
    \item[] Question: Do the main claims made in the abstract and introduction accurately reflect the paper's contributions and scope?
    \item[] Answer: \answerYes{}.
    \item[] Justification: The abstract and introduction state the theorem-level contributions and delimit the model as a stylized theoretical analysis. The main claims are matched to formal definitions, theorem statements, corollaries, and proof sections in the body of the paper.
    \item[] Guidelines:
    \begin{itemize}
        \item The answer \answerNA{} means that the abstract and introduction do not include the claims made in the paper.
        \item The abstract and/or introduction should clearly state the claims made, including the contributions made in the paper and important assumptions and limitations. A \answerNo{} or \answerNA{} answer to this question will not be perceived well by the reviewers. 
        \item The claims made should match theoretical and experimental results, and reflect how much the results can be expected to generalize to other settings. 
        \item It is fine to include aspirational goals as motivation as long as it is clear that these goals are not attained by the paper. 
    \end{itemize}

\item {\bf Limitations}
    \item[] Question: Does the paper discuss the limitations of the work performed by the authors?
    \item[] Answer: \answerYes{}.
    \item[] Justification: The paper discusses its modeling scope, boundary cases, and open problems in the introduction, model discussion, boundary/extension discussion, and concluding open-problems material. In particular, the analysis is presented as a theoretical characterization rather than an empirical or deployed-system evaluation. Section 7 (Limitations and Scope) collects these limitations in one place.
    \item[] Guidelines:
    \begin{itemize}
        \item The answer \answerNA{} means that the paper has no limitation while the answer \answerNo{} means that the paper has limitations, but those are not discussed in the paper. 
        \item The authors are encouraged to create a separate ``Limitations'' section in their paper.
        \item The paper should point out any strong assumptions and how robust the results are to violations of these assumptions (e.g., independence assumptions, noiseless settings, model well-specification, asymptotic approximations only holding locally). The authors should reflect on how these assumptions might be violated in practice and what the implications would be.
        \item The authors should reflect on the scope of the claims made, e.g., if the approach was only tested on a few datasets or with a few runs. In general, empirical results often depend on implicit assumptions, which should be articulated.
        \item The authors should reflect on the factors that influence the performance of the approach. For example, a facial recognition algorithm may perform poorly when image resolution is low or images are taken in low lighting. Or a speech-to-text system might not be used reliably to provide closed captions for online lectures because it fails to handle technical jargon.
        \item The authors should discuss the computational efficiency of the proposed algorithms and how they scale with dataset size.
        \item If applicable, the authors should discuss possible limitations of their approach to address problems of privacy and fairness.
        \item While the authors might fear that complete honesty about limitations might be used by reviewers as grounds for rejection, a worse outcome might be that reviewers discover limitations that aren't acknowledged in the paper. The authors should use their best judgment and recognize that individual actions in favor of transparency play an important role in developing norms that preserve the integrity of the community. Reviewers will be specifically instructed to not penalize honesty concerning limitations.
    \end{itemize}

\item {\bf Theory assumptions and proofs}
    \item[] Question: For each theoretical result, does the paper provide the full set of assumptions and a complete (and correct) proof?
    \item[] Answer: \answerYes{}.
    \item[] Justification: The paper states the assumptions through formal definitions and theorem hypotheses, and provides proofs for the main theorems, lemmas, propositions, and corollaries. Results explicitly marked as sketches or boundary observations are identified as such.
    \item[] Guidelines:
    \begin{itemize}
        \item The answer \answerNA{} means that the paper does not include theoretical results. 
        \item All the theorems, formulas, and proofs in the paper should be numbered and cross-referenced.
        \item All assumptions should be clearly stated or referenced in the statement of any theorems.
        \item The proofs can either appear in the main paper or the supplemental material, but if they appear in the supplemental material, the authors are encouraged to provide a short proof sketch to provide intuition. 
        \item Inversely, any informal proof provided in the core of the paper should be complemented by formal proofs provided in appendix or supplemental material.
        \item Theorems and Lemmas that the proof relies upon should be properly referenced. 
    \end{itemize}

    \item {\bf Experimental result reproducibility}
    \item[] Question: Does the paper fully disclose all the information needed to reproduce the main experimental results of the paper to the extent that it affects the main claims and/or conclusions of the paper (regardless of whether the code and data are provided or not)?
    \item[] Answer: \answerNA{}.
    \item[] Justification: The paper does not report experiments. The main results are mathematical statements whose verification depends on the formal model, theorem statements, and proofs provided in the paper.
    \item[] Guidelines:
    \begin{itemize}
        \item The answer \answerNA{} means that the paper does not include experiments.
        \item If the paper includes experiments, a \answerNo{} answer to this question will not be perceived well by the reviewers: Making the paper reproducible is important, regardless of whether the code and data are provided or not.
        \item If the contribution is a dataset and\slash or model, the authors should describe the steps taken to make their results reproducible or verifiable. 
        \item Depending on the contribution, reproducibility can be accomplished in various ways. For example, if the contribution is a novel architecture, describing the architecture fully might suffice, or if the contribution is a specific model and empirical evaluation, it may be necessary to either make it possible for others to replicate the model with the same dataset, or provide access to the model. In general. releasing code and data is often one good way to accomplish this, but reproducibility can also be provided via detailed instructions for how to replicate the results, access to a hosted model (e.g., in the case of a large language model), releasing of a model checkpoint, or other means that are appropriate to the research performed.
        \item While NeurIPS does not require releasing code, the conference does require all submissions to provide some reasonable avenue for reproducibility, which may depend on the nature of the contribution. For example
        \begin{enumerate}
            \item If the contribution is primarily a new algorithm, the paper should make it clear how to reproduce that algorithm.
            \item If the contribution is primarily a new model architecture, the paper should describe the architecture clearly and fully.
            \item If the contribution is a new model (e.g., a large language model), then there should either be a way to access this model for reproducing the results or a way to reproduce the model (e.g., with an open-source dataset or instructions for how to construct the dataset).
            \item We recognize that reproducibility may be tricky in some cases, in which case authors are welcome to describe the particular way they provide for reproducibility. In the case of closed-source models, it may be that access to the model is limited in some way (e.g., to registered users), but it should be possible for other researchers to have some path to reproducing or verifying the results.
        \end{enumerate}
    \end{itemize}

\item {\bf Open access to data and code}
    \item[] Question: Does the paper provide open access to the data and code, with sufficient instructions to faithfully reproduce the main experimental results, as described in supplemental material?
    \item[] Answer: \answerNA{}.
    \item[] Justification: The paper does not use datasets, trained models, or experimental code to support its claims. Algorithmic statements, where present, are specified and analyzed mathematically in the text.
    \item[] Guidelines:
    \begin{itemize}
        \item The answer \answerNA{} means that paper does not include experiments requiring code.
        \item Please see the NeurIPS code and data submission guidelines (\url{https://neurips.cc/public/guides/CodeSubmissionPolicy}) for more details.
        \item While we encourage the release of code and data, we understand that this might not be possible, so \answerNo{} is an acceptable answer. Papers cannot be rejected simply for not including code, unless this is central to the contribution (e.g., for a new open-source benchmark).
        \item The instructions should contain the exact command and environment needed to run to reproduce the results. See the NeurIPS code and data submission guidelines (\url{https://neurips.cc/public/guides/CodeSubmissionPolicy}) for more details.
        \item The authors should provide instructions on data access and preparation, including how to access the raw data, preprocessed data, intermediate data, and generated data, etc.
        \item The authors should provide scripts to reproduce all experimental results for the new proposed method and baselines. If only a subset of experiments are reproducible, they should state which ones are omitted from the script and why.
        \item At submission time, to preserve anonymity, the authors should release anonymized versions (if applicable).
        \item Providing as much information as possible in supplemental material (appended to the paper) is recommended, but including URLs to data and code is permitted.
    \end{itemize}

\item {\bf Experimental setting/details}
    \item[] Question: Does the paper specify all the training and test details (e.g., data splits, hyperparameters, how they were chosen, type of optimizer) necessary to understand the results?
    \item[] Answer: \answerNA{}.
    \item[] Justification: The paper contains no training, test set, hyperparameter, optimizer, or empirical evaluation setup. Its results are derived from the stated theoretical model.
    \item[] Guidelines:
    \begin{itemize}
        \item The answer \answerNA{} means that the paper does not include experiments.
        \item The experimental setting should be presented in the core of the paper to a level of detail that is necessary to appreciate the results and make sense of them.
        \item The full details can be provided either with the code, in appendix, or as supplemental material.
    \end{itemize}

\item {\bf Experiment statistical significance}
    \item[] Question: Does the paper report error bars suitably and correctly defined or other appropriate information about the statistical significance of the experiments?
    \item[] Answer: \answerNA{}.
    \item[] Justification: The paper does not include experiments or statistical estimates. Consequently, there are no empirical error bars, confidence intervals, or significance tests to report.
    \item[] Guidelines:
    \begin{itemize}
        \item The answer \answerNA{} means that the paper does not include experiments.
        \item The authors should answer \answerYes{} if the results are accompanied by error bars, confidence intervals, or statistical significance tests, at least for the experiments that support the main claims of the paper.
        \item The factors of variability that the error bars are capturing should be clearly stated (for example, train/test split, initialization, random drawing of some parameter, or overall run with given experimental conditions).
        \item The method for calculating the error bars should be explained (closed form formula, call to a library function, bootstrap, etc.)
        \item The assumptions made should be given (e.g., Normally distributed errors).
        \item It should be clear whether the error bar is the standard deviation or the standard error of the mean.
        \item It is OK to report 1-sigma error bars, but one should state it. The authors should preferably report a 2-sigma error bar than state that they have a 96\% CI, if the hypothesis of Normality of errors is not verified.
        \item For asymmetric distributions, the authors should be careful not to show in tables or figures symmetric error bars that would yield results that are out of range (e.g., negative error rates).
        \item If error bars are reported in tables or plots, the authors should explain in the text how they were calculated and reference the corresponding figures or tables in the text.
    \end{itemize}

\item {\bf Experiments compute resources}
    \item[] Question: For each experiment, does the paper provide sufficient information on the computer resources (type of compute workers, memory, time of execution) needed to reproduce the experiments?
    \item[] Answer: \answerNA{}.
    \item[] Justification: The paper does not run computational experiments. No experimental compute resources are required to reproduce the paper's main claims.
    \item[] Guidelines:
    \begin{itemize}
        \item The answer \answerNA{} means that the paper does not include experiments.
        \item The paper should indicate the type of compute workers CPU or GPU, internal cluster, or cloud provider, including relevant memory and storage.
        \item The paper should provide the amount of compute required for each of the individual experimental runs as well as estimate the total compute. 
        \item The paper should disclose whether the full research project required more compute than the experiments reported in the paper (e.g., preliminary or failed experiments that didn't make it into the paper). 
    \end{itemize}
    
\item {\bf Code of ethics}
    \item[] Question: Does the research conducted in the paper conform, in every respect, with the NeurIPS Code of Ethics \url{https://neurips.cc/public/EthicsGuidelines}?
    \item[] Answer: \answerYes{}.
    \item[] Justification: The work is a theoretical analysis and does not involve human subjects, private data, deployed interventions, or release of potentially harmful models or datasets. The paper preserves anonymity in the submission version.
    \item[] Guidelines:
    \begin{itemize}
        \item The answer \answerNA{} means that the authors have not reviewed the NeurIPS Code of Ethics.
        \item If the authors answer \answerNo, they should explain the special circumstances that require a deviation from the Code of Ethics.
        \item The authors should make sure to preserve anonymity (e.g., if there is a special consideration due to laws or regulations in their jurisdiction).
    \end{itemize}

\item {\bf Broader impacts}
    \item[] Question: Does the paper discuss both potential positive societal impacts and negative societal impacts of the work performed?
    \item[] Answer: \answerYes{}.
    \item[] Justification: The paper discusses positive implications for robust, privacy-preserving, or incentive-compatible decentralized mechanisms, while also identifying boundary cases and residual attack surfaces. The work is theoretical and does not propose a deployed system, so the impact discussion is correspondingly scoped to technical security and market-design implications.
    \item[] Guidelines:
    \begin{itemize}
        \item The answer \answerNA{} means that there is no societal impact of the work performed.
        \item If the authors answer \answerNA{} or \answerNo, they should explain why their work has no societal impact or why the paper does not address societal impact.
        \item Examples of negative societal impacts include potential malicious or unintended uses (e.g., disinformation, generating fake profiles, surveillance), fairness considerations (e.g., deployment of technologies that could make decisions that unfairly impact specific groups), privacy considerations, and security considerations.
        \item The conference expects that many papers will be foundational research and not tied to particular applications, let alone deployments. However, if there is a direct path to any negative applications, the authors should point it out. For example, it is legitimate to point out that an improvement in the quality of generative models could be used to generate Deepfakes for disinformation. On the other hand, it is not needed to point out that a generic algorithm for optimizing neural networks could enable people to train models that generate Deepfakes faster.
        \item The authors should consider possible harms that could arise when the technology is being used as intended and functioning correctly, harms that could arise when the technology is being used as intended but gives incorrect results, and harms following from (intentional or unintentional) misuse of the technology.
        \item If there are negative societal impacts, the authors could also discuss possible mitigation strategies (e.g., gated release of models, providing defenses in addition to attacks, mechanisms for monitoring misuse, mechanisms to monitor how a system learns from feedback over time, improving the efficiency and accessibility of ML).
    \end{itemize}
    
\item {\bf Safeguards}
    \item[] Question: Does the paper describe safeguards that have been put in place for responsible release of data or models that have a high risk for misuse (e.g., pre-trained language models, image generators, or scraped datasets)?
    \item[] Answer: \answerNA{}.
    \item[] Justification: The paper does not release data, trained models, scraped datasets, or other assets with high misuse risk. No release-specific safeguards are therefore applicable.
    \item[] Guidelines:
    \begin{itemize}
        \item The answer \answerNA{} means that the paper poses no such risks.
        \item Released models that have a high risk for misuse or dual-use should be released with necessary safeguards to allow for controlled use of the model, for example by requiring that users adhere to usage guidelines or restrictions to access the model or implementing safety filters. 
        \item Datasets that have been scraped from the Internet could pose safety risks. The authors should describe how they avoided releasing unsafe images.
        \item We recognize that providing effective safeguards is challenging, and many papers do not require this, but we encourage authors to take this into account and make a best faith effort.
    \end{itemize}

\item {\bf Licenses for existing assets}
    \item[] Question: Are the creators or original owners of assets (e.g., code, data, models), used in the paper, properly credited and are the license and terms of use explicitly mentioned and properly respected?
    \item[] Answer: \answerNA{}.
    \item[] Justification: The paper does not use existing code, datasets, models, benchmarks, or other external assets as research inputs. Prior scholarly work is credited through citations in the related-work and references sections.
    \item[] Guidelines:
    \begin{itemize}
        \item The answer \answerNA{} means that the paper does not use existing assets.
        \item The authors should cite the original paper that produced the code package or dataset.
        \item The authors should state which version of the asset is used and, if possible, include a URL.
        \item The name of the license (e.g., CC-BY 4.0) should be included for each asset.
        \item For scraped data from a particular source (e.g., website), the copyright and terms of service of that source should be provided.
        \item If assets are released, the license, copyright information, and terms of use in the package should be provided. For popular datasets, \url{paperswithcode.com/datasets} has curated licenses for some datasets. Their licensing guide can help determine the license of a dataset.
        \item For existing datasets that are re-packaged, both the original license and the license of the derived asset (if it has changed) should be provided.
        \item If this information is not available online, the authors are encouraged to reach out to the asset's creators.
    \end{itemize}

\item {\bf New assets}
    \item[] Question: Are new assets introduced in the paper well documented and is the documentation provided alongside the assets?
    \item[] Answer: \answerNA{}.
    \item[] Justification: The paper does not introduce or release new datasets, code packages, models, or benchmarks. Its contribution consists of formal models, theorems, and proofs.
    \item[] Guidelines:
    \begin{itemize}
        \item The answer \answerNA{} means that the paper does not release new assets.
        \item Researchers should communicate the details of the dataset\slash code\slash model as part of their submissions via structured templates. This includes details about training, license, limitations, etc. 
        \item The paper should discuss whether and how consent was obtained from people whose asset is used.
        \item At submission time, remember to anonymize your assets (if applicable). You can either create an anonymized URL or include an anonymized zip file.
    \end{itemize}

\item {\bf Crowdsourcing and research with human subjects}
    \item[] Question: For crowdsourcing experiments and research with human subjects, does the paper include the full text of instructions given to participants and screenshots, if applicable, as well as details about compensation (if any)? 
    \item[] Answer: \answerNA{}.
    \item[] Justification: The paper does not involve crowdsourcing, surveys, experiments with human participants, or human-subject data collection. There are therefore no participant instructions, screenshots, or compensation details to report.
    \item[] Guidelines:
    \begin{itemize}
        \item The answer \answerNA{} means that the paper does not involve crowdsourcing nor research with human subjects.
        \item Including this information in the supplemental material is fine, but if the main contribution of the paper involves human subjects, then as much detail as possible should be included in the main paper. 
        \item According to the NeurIPS Code of Ethics, workers involved in data collection, curation, or other labor should be paid at least the minimum wage in the country of the data collector. 
    \end{itemize}

\item {\bf Institutional review board (IRB) approvals or equivalent for research with human subjects}
    \item[] Question: Does the paper describe potential risks incurred by study participants, whether such risks were disclosed to the subjects, and whether Institutional Review Board (IRB) approvals (or an equivalent approval/review based on the requirements of your country or institution) were obtained?
    \item[] Answer: \answerNA{}.
    \item[] Justification: The paper does not involve human subjects or crowdsourced participants. IRB or equivalent human-subjects review is therefore not applicable.
    \item[] Guidelines:
    \begin{itemize}
        \item The answer \answerNA{} means that the paper does not involve crowdsourcing nor research with human subjects.
        \item Depending on the country in which research is conducted, IRB approval (or equivalent) may be required for any human subjects research. If you obtained IRB approval, you should clearly state this in the paper. 
        \item We recognize that the procedures for this may vary significantly between institutions and locations, and we expect authors to adhere to the NeurIPS Code of Ethics and the guidelines for their institution. 
        \item For initial submissions, do not include any information that would break anonymity (if applicable), such as the institution conducting the review.
    \end{itemize}

\item {\bf Declaration of LLM usage}
    \item[] Question: Does the paper describe the usage of LLMs if it is an important, original, or non-standard component of the core methods in this research? Note that if the LLM is used only for writing, editing, or formatting purposes and does \emph{not} impact the core methodology, scientific rigor, or originality of the research, declaration is not required.
    \item[] Answer: \answerYes{}.
    \item[] Justification: AI tools were used to help check the manuscript's derivations and proofs, and to provide suggestions for improving and revising the derivations.
    \item[] Guidelines:
    \begin{itemize}
        \item The answer \answerNA{} means that the core method development in this research does not involve LLMs as any important, original, or non-standard components.
        \item Please refer to our LLM policy in the NeurIPS handbook for what should or should not be described.
    \end{itemize}

\end{enumerate}


\begin{thebibliography}{99}

\bibitem{AbrahamDolevGonenHalpern06}
Ittai Abraham, Danny Dolev, Rica Gonen, and Joseph Y. Halpern.
\newblock Distributed Computing Meets Game Theory: Robust Mechanisms for
Rational Secret Sharing and Multiparty Computation.
\newblock In \emph{Proceedings of the Twenty-Fifth Annual ACM Symposium on
Principles of Distributed Computing}, pages 53--62, 2006.

\bibitem{AlipanahlooHafidZhang24}
Zeinab Alipanahloo, Abdelhakim Senhaji Hafid, and Kaiwen Zhang.
\newblock Maximum Extractable Value ({MEV}) Mitigation Approaches in {E}thereum
and Layer-2 Chains: A Comprehensive Survey.
\newblock \emph{IEEE Access}, 12:185212--185231, 2024.

\bibitem{Aumann59}
Robert~J. Aumann.
\newblock Acceptable Points in General Cooperative $n$-Person Games.
\newblock In A.~W. Tucker and R.~D. Luce, editors, \emph{Contributions to the
Theory of Games {IV}}, volume~40 of \emph{Annals of Mathematics Studies}, pages
287--324. Princeton University Press, 1959.

\bibitem{BabelJeanLouisJiMisraKelkarMudiyanselageMillerJuels24}
Kushal Babel, Nerla Jean-Louis, Yan Ji, Ujval Misra, Mahimna Kelkar, Kosala
Yapa Mudiyanselage, Andrew Miller, and Ari Juels.
\newblock {PROF}: Protected Order Flow in a Profit-Seeking World.
\newblock In \emph{2026 IEEE 11th European Symposium on Security and Privacy
(EuroS\&P)}, pages 398--418, 2026.

\bibitem{BebelOjha22}
Joseph Bebel and Dev Ojha.
\newblock Ferveo: Threshold Decryption for Mempool Privacy in {BFT} Networks.
\newblock \emph{Cryptology ePrint Archive}, Paper 2022/898, 2022.

\bibitem{Becker68}
Gary~S. Becker.
\newblock Crime and Punishment: An Economic Approach.
\newblock \emph{Journal of Political Economy}, 76(2):169--217, 1968.

\bibitem{BernheimPelegWhinston87}
B.~Douglas Bernheim, Bezalel Peleg, and Michael~D. Whinston.
\newblock Coalition-Proof {N}ash Equilibria {I}. {C}oncepts.
\newblock \emph{Journal of Economic Theory}, 42(1):1--12, 1987.

\bibitem{BonehBunzNayakRotemShoup25}
Dan Boneh, Benedikt B\"unz, Kartik Nayak, Lior Rotem, and Victor Shoup.
\newblock Context-Dependent Threshold Decryption and Its Applications.
\newblock In \emph{Advances in Cryptology -- {ASIACRYPT} 2025}, Lecture Notes in
Computer Science, pages 506--538. Springer, 2025.

\bibitem{BonehPartapRotem23}
Dan Boneh, Aditi Partap, and Lior Rotem.
\newblock Accountability for Misbehavior in Threshold Decryption via Threshold
Traitor Tracing.
\newblock In \emph{Advances in Cryptology -- {CRYPTO} 2024}, volume 14926 of
\emph{Lecture Notes in Computer Science}, pages 317--351. Springer, 2024.

\bibitem{BudishLewisPyeRoughgarden24}
Eric Budish, Andrew Lewis-Pye, and Tim Roughgarden.
\newblock The Economic Limits of Permissionless Consensus.
\newblock In \emph{Proceedings of the 25th {ACM} Conference on Economics and
Computation}, 2024.

\bibitem{ButerinGriffith17}
Vitalik Buterin and Virgil Griffith.
\newblock Casper the Friendly Finality Gadget.
\newblock \emph{arXiv preprint arXiv:1710.09437}, 2017.

\bibitem{ChenZhuKandasamy23}
Yiding Chen, Jerry Zhu, and Kirthevasan Kandasamy.
\newblock Mechanism Design for Collaborative Normal Mean Estimation.
\newblock In \emph{Advances in Neural Information Processing Systems 36}, 2023.

\bibitem{CramerGennaroSchoenmakers97}
Ronald Cramer, Rosario Gennaro, and Berry Schoenmakers.
\newblock A Secure and Optimally Efficient Multi-Authority Election Scheme.
\newblock In \emph{Advances in Cryptology -- {EUROCRYPT} '97}, volume 1233 of
\emph{Lecture Notes in Computer Science}, pages 103--118. Springer, 1997.

\bibitem{DaianGoldfederKellLiZhaoBentovBreidenbachJuels19}
Philip Daian, Steven Goldfeder, Tyler Kell, Yunqi Li, Xueyuan Zhao, Iddo
Bentov, Lorenz Breidenbach, and Ari Juels.
\newblock Flash Boys 2.0: Frontrunning in Decentralized Exchanges, Miner
Extractable Value, and Consensus Instability.
\newblock In \emph{2020 IEEE Symposium on Security and Privacy}, pages 910--927,
2020.

\bibitem{DornerKonstantinovPashalievVechev23}
Florian~E. Dorner, Nikola Konstantinov, Georgi Pashaliev, and Martin Vechev.
\newblock Incentivizing Honesty among Competitors in Collaborative Learning and
Optimization.
\newblock In \emph{Advances in Neural Information Processing Systems 36}, 2023.

\bibitem{DziembowskiFaustLuhn24}
Stefan Dziembowski, Sebastian Faust, and Jannik Luhn.
\newblock Shutter Network: Private Transactions from Threshold Cryptography.
\newblock \emph{Cryptology ePrint Archive}, Paper 2024/1981, 2024.

\bibitem{FranklinReiter96}
Matthew~K. Franklin and Michael~K. Reiter.
\newblock The Design and Implementation of a Secure Auction Service.
\newblock \emph{IEEE Transactions on Software Engineering}, 22(5):302--312,
1996.

\bibitem{HalpernTeague04}
Joseph Halpern and Vanessa Teague.
\newblock Rational Secret Sharing and Multiparty Computation: Extended
Abstract.
\newblock In \emph{Proceedings of the Thirty-Sixth Annual ACM Symposium on
Theory of Computing}, 2004.

\bibitem{KavousiLeJovanovicDanezis23}
Alireza Kavousi, Duc~V. Le, Philipp Jovanovic, and George Danezis.
\newblock BlindPerm: Efficient {MEV} Mitigation with an Encrypted Mempool and
Permutation.
\newblock In \emph{29th International Conference on Principles of Distributed
Systems (OPODIS 2025)}, volume 361 of \emph{Leibniz International Proceedings in
Informatics (LIPIcs)}, pages 36:1--36:21, 2026.

\bibitem{KulkarniDiamandisChitra22}
Kshitij Kulkarni, Theo Diamandis, and Tarun Chitra.
\newblock Towards a Theory of Maximal Extractable Value {I}: Constant Function
Market Makers.
\newblock \emph{arXiv preprint arXiv:2207.11835}, 2022.

\bibitem{LiHuWangDuanWang23}
Rujia Li, Xuanwei Hu, Qin Wang, Sisi Duan, and Qi Wang.
\newblock Transaction Fairness in Blockchains, Revisited.
\newblock \emph{IEEE Transactions on Dependable and Secure Computing},
23(1):752--765, 2026.

\bibitem{MailathSamuelson06}
George~J. Mailath and Larry Samuelson.
\newblock \emph{Repeated Games and Reputations: Long-Run Relationships}.
\newblock Oxford University Press, 2006.

\bibitem{MaterwalaNaikTahaAbedSvetinovic24}
Huned Materwala, Shraddha~M. Naik, Aya Taha, Tala Abdulrahman Abed, and Davor
Svetinovic.
\newblock Maximal Extractable Value in Decentralized Finance: Taxonomy,
Detection, and Mitigation.
\newblock \emph{IEEE Transactions on Services Computing}, 18(6):4386--4407, 2025.

\bibitem{MiltersenNielsenTriandopoulos09}
Peter Bro Miltersen, Jesper Buus Nielsen, and Nikos Triandopoulos.
\newblock Privacy-Enhancing Auctions Using Rational Cryptography.
\newblock In \emph{Advances in Cryptology -- {CRYPTO} 2009}, volume 5677 of
\emph{Lecture Notes in Computer Science}, pages 541--558. Springer, 2009.

\bibitem{MouchetTroncosoPastorizaBossuatHubaux21}
Christian Mouchet, Juan Troncoso-Pastoriza, Jean-Philippe Bossuat, and
Jean-Pierre Hubaux.
\newblock Multiparty Homomorphic Encryption from Ring-Learning-with-Errors.
\newblock \emph{Proceedings on Privacy Enhancing Technologies},
2021(4):291--311, 2021.

\bibitem{NasrulinIshmaevDecouchantPouwelse23}
Bulat Nasrulin, Georgy Ishmaev, J\'er\'emie Decouchant, and Johan Pouwelse.
\newblock L{\O}: An Accountable Mempool for {MEV} Resistance.
\newblock In \emph{Proceedings of the 24th ACM/IFIP International Middleware
Conference}, pages 98--110, 2023.

\bibitem{RaisaroTroncosoPastorizaMisbachSousaPradervandMissiagliaMichielinFordHubaux19}
Jean~Louis Raisaro, Juan~Ram\'on Troncoso-Pastoriza, Micka\"el Misbach,
Jo\~ao~S\'a Sousa, Sylvain Pradervand, Edoardo Missiaglia, Olivier Michielin,
Bryan Ford, and Jean-Pierre Hubaux.
\newblock {MedCo}: Enabling Secure and Privacy-Preserving Exploration of
Distributed Clinical and Genomic Data.
\newblock \emph{IEEE/ACM Transactions on Computational Biology and
Bioinformatics}, 16(4):1328--1341, 2019.

\bibitem{RondeletKilbourn23}
Antoine Rondelet and Quintus Kilbourn.
\newblock Mempool Privacy: An Economic Perspective.
\newblock \emph{arXiv preprint arXiv:2307.10878}, 2023.

\bibitem{SavPyrgelisTroncosoPastorizaFroelicherBossuatSousaHubaux21}
Sinem Sav, Apostolos Pyrgelis, Juan~Ram\'on Troncoso-Pastoriza, David
Froelicher, Jean-Philippe Bossuat, Jo\~ao~S\'a Sousa, and Jean-Pierre Hubaux.
\newblock {POSEIDON}: Privacy-Preserving Federated Neural Network Learning.
\newblock In \emph{Proceedings of the Network and Distributed System Security
Symposium ({NDSS})}, 2021.

\bibitem{Shapley62}
Lloyd~S. Shapley.
\newblock Simple Games: An Outline of the Descriptive Theory.
\newblock \emph{Behavioral Science}, 7(1):59--66, 1962.

\bibitem{StengeleRaiberMuellerQuadeHartenstein21}
Oliver Stengele, Markus Raiber, J\"orn M\"uller-Quade, and Hannes Hartenstein.
\newblock {ETHTID}: Deployable Threshold Information Disclosure on Ethereum.
\newblock In \emph{Proceedings of the 3rd International Conference on
Blockchain Computing and Applications}, pages 127--134, 2021.

\bibitem{TaylorZwicker99}
Alan~D. Taylor and William~S. Zwicker.
\newblock \emph{Simple Games: Desirability Relations, Trading,
Pseudoweightings}.
\newblock Princeton University Press, 1999.

\bibitem{TorresCaminoState21}
Christof~Ferreira Torres, Ramiro Camino, and Radu State.
\newblock Frontrunner Jones and the Raiders of the Dark Forest: An Empirical
Study of Frontrunning on the {E}thereum Blockchain.
\newblock In \emph{30th {USENIX} Security Symposium}, pages 1343--1359, 2021.

\bibitem{WadhwaMaThieryMonnotZanoliniZhangNayak25}
Sarisht Wadhwa, Julian Ma, Thomas Thiery, Barnabe Monnot, Luca Zanolini, Fan
Zhang, and Kartik Nayak.
\newblock {AUCIL}: An Inclusion List Design for Rational Parties.
\newblock \emph{Cryptology ePrint Archive}, Paper 2025/194, 2025.

\end{thebibliography}
\end{document}